\documentclass[conference]{IEEEtran}


\makeatletter
\def\ps@headings{%
\def\@oddhead{}%
\def\@evenhead{}%
\def\@oddfoot{\mbox{}\scriptsize\rightmark \hfil \thepage}%
\def\@evenfoot{\scriptsize\thepage \hfil \leftmark\mbox{}}}
\makeatother

\pagestyle{empty}

%


%

%
\usepackage[dvips]{color}
\usepackage{epsf}
\usepackage{times}
\usepackage{epsfig}
\usepackage{graphicx}
\usepackage{amsmath}
\usepackage{amssymb}
\usepackage{amsxtra}
\usepackage{here}
\usepackage{rawfonts}
\usepackage{times}
\usepackage{url}
\usepackage{cite}

\newtheorem{corollary}{\bf Corollary}
\newtheorem{theorem}{\bf Theorem}

\newtheorem{lemma}{\bf Lemma}

\newtheorem{definition}{\bf Definition}

\headheight 0.15in
\topmargin=-0.9in \oddsidemargin -0.475in \textwidth=7.5in
\textheight=9.9in

\newlength{\aligntop}
\setlength{\aligntop}{-0.6em}
\newlength{\alignbot}
\setlength{\alignbot}{-0.85\baselineskip}
\addtolength{\alignbot}{-0.2em}
\makeatletter
\renewenvironment{align}{%
  \vspace{\aligntop}
  \start@align\@ne\st@rredfalse\m@ne
}{%
  \math@cr \black@\totwidth@
  \egroup
  \ifingather@
    \restorealignstate@
    \egroup
    \nonumber
    \ifnum0=`{\fi\iffalse}\fi
  \else
    $$%
  \fi
  \ignorespacesafterend%
  \vspace{\alignbot}\par\noindent
}

\IEEEoverridecommandlockouts
\begin{document}
\title{Reactive Power Compensation Game under Prospect-Theoretic Framing Effects \vspace{-0.3cm} }

\author{\authorblockN{Yunpeng Wang$^{1, 5}$, Walid Saad$^{2, 4}$, Arif I. Sarwat$^3$, Choong Seon Hong$^4$}\vspace{+0.1cm} \authorblockA{\small
$^1$ Electrical and Computer Engineering Department, University of Miami, Coral Gables, FL, USA, Email: \url{y.wang68@umiami.edu}\\
$^2$ Wireless@VT, Bradley Department of Electrical and Computer Engineering, Virginia Tech, Blacksburg, VA, USA, Email: \url{walids@vt.edu}\\
$^3$ Department of Electrical and Computer Engineering, Florida International University, Miami, FL, USA, Email: \url{asarwat@fiu.edu}\\
$^4$ Department of Computer Engineering, Kyung Hee University, South Korea, Email: \url{cshong@khu.ac.kr}\\
$^5$ Dispatching Control Center, State Grid Beijing Electric Power Company, China\\
\vspace{-0.9cm}
}%
}
\date{}
\maketitle

\begin{abstract}
Reactive power compensation is an important challenge in current and future smart power systems. However, in the context of reactive power compensation, most existing studies assume that customers can assess their compensation value, i.e., Var unit, objectively. In this paper, customers are assumed to make decisions that pertain to reactive power coordination. In consequence, the way in which those customers evaluate the compensation value resulting from their individual decisions will impact the overall grid performance. In particular, a behavioral framework, based on the \emph{framing effect of prospect theory (PT)}, is developed to study the effect of both objective value and subjective evaluation in a reactive power compensation game. For example, such effect allows customers to optimize a subjective value of their utility which essentially frames the objective utility with respect to a reference point. This game enables customers to coordinate the use of their electrical devices to compensate reactive power. For the proposed game, both the objective case using expected utility theory (EUT) and the PT consideration are solved via a learning algorithm that converges to a mixed-strategy Nash equilibrium. In addition, several key properties of this game are derived analytically. Simulation results show that, under PT, customers are likely to make decisions that differ from those predicted by classical models. For instance, using an illustrative two-customer case, we show that a PT customer will increase the conservative strategy (achieving a high power factor) by $29\%$ compared to a conventional customer. Similar insights are also observed for a case with three customers.

\end{abstract}
\begin{keywords}
Smart grid, game theory, prospect theory, framing effect, reactive power compensation.
\end{keywords}\vspace{-0cm}

\section{Introduction}
Reactive power compensation, commonly known as Var compensation, aims to improve the efficiency of delivering energy in power systems by reducing transmission losses. This has led to much research that investigates how to control and manage reactive power in a smart grid~\cite{dixon2005reactive}. Delivering energy over power lines will generate active and reactive power, and a suitable reactive power compensation can decrease energy losses and increase the power factor which is defined as the value of the tangent of the angle between active and reactive power~\cite{PS00}. However, due to the aging of the devices (i.e., motors, switches) and the varying energy requirements from end-nodes, smart grid customers may obtain different power factors depending on the same devices that they are previously and currently using. In particular, in the smart grid, the power company can require customers to achieve a given power factor for efficient delivery of AC power~\cite{kouro2010recent}. Recent studies on reactive power compensation have focused on analyzing coordination mechanisms, in which some customers can support extra reactive power on behalf of others, as discussed in ~\cite{blaabjerg2012power, diaz2012review, mohammad2014review}.

Reactive power compensation in the smart grid has been investigated in~\cite{almeida2011optimal, saraswat2013novel, xu2010research, soleymani2013nash}. In particular, reactive power coordination between customers in a local area has been technically introduced at the hardware level, using voltage-source-converter technologies that can both absorb and supply reactive power, as discussed in~\cite{4352074summary, sivachandran2011new, 6495738}. To further explore the coordination between customers, the authors in~\cite{almeida2011optimal} proposed an active-reactive power dispatch procedure to minimize opportunity costs via the use of marginal pricing mechanisms to compensate generators for power provision. The work in~\cite{saraswat2013novel} developed a Pareto-optimization based zonal reactive power market model and a hybrid evolutionary approach was applied in a competitive electricity market. In~\cite{xu2010research}, the authors studied the asynchronous generator system in a wind farm so as to efficiently improve Var compensation between different operating moments of asynchronous generators. The authors in~\cite{soleymani2013nash} allowed the customer to bid reactive power in the energy market as well as maintain the voltage stability margin in an IEEE 39 bus test system. Other related approaches for compensating reactive power are discussed in~\cite{tan2013general, kisacikoglu2010examination, santacana2010getting, bolognani2013distributed}.

The works in~\cite{almeida2011optimal, saraswat2013novel, xu2010research, soleymani2013nash, tan2013general, kisacikoglu2010examination, santacana2010getting, bolognani2013distributed} study reactive power compensation using mathematical tools, such as optimization and game theory. However, most of these existing works assume that customers, as the compensating nodes in the grid, can objectively and precisely assess their power factor compensation, i.e., Var value. However, in practice, customers may have subjective perceptions on how they view such Var values as well as on how other customers compensate reactive power. For example, operating inductive equipment (i.e., motor, relay, speaker, solenoid, transformer and lamp ballast, or even the operation of switched capacitor) will change the tangent relationship between active power and reactive power and then change the transmission losses. This tangent value, or power factor, will impact the active power which, in turn, impacts a customer's electricity bill. To properly study such reactive power compensation one must therefore account for different customers perceptions on the economic gains and losses associated with their bills, which is directly dependent on the active power. In particular, other considerations involve the customers' opinion on the usage of electricity, the reduction of transmission losses, the economic payoffs and the effect of electricity operation and requirement. Thus, when designing power factor compensation and coordination mechanisms, one must take into account such customer-related human factors.

The main contribution of this paper is to propose a new game-theoretic framework to understand how customers can coordinate their reactive power compensations while taking into account their individual subjective perceptions on the economic gains and losses associated with this coordination. We formulate the compensation problem as a static noncooperative game, in which a customer can decide whether or not to act in concert with others, based on reactive power technologies (i.e., install capacitor and voltage support), when their inductive loads change (such as using speakers, cables or motors in a community). In this game, each customer aims to optimize a Var utility that captures the benefits of reaching a high power factor and the associated costs needed to provide reactive power. We allow customers to \emph{subjectively} evaluate their \emph{objective} utility which implies that customers can have different ways to measure the economic benefits that they reap from the power compensation game~\cite{kahneman1979prospect, tversky1981framing, tversky1992advances, laobanPTintro}. Compared to related works on smart grids~\cite{almeida2011optimal, saraswat2013novel, xu2010research, soleymani2013nash, tan2013general, kisacikoglu2010examination, santacana2010getting, bolognani2013distributed, 4352074summary, sivachandran2011new, 6495738}, the contributions of this paper include: \emph{1)} in contrast to conventional game utility, we allow customers to subjectively evaluate their compensation of Var gains and losses and then explore the probability of achieving this compensation; \emph{2)} we design a Var coordination mechanism that encourages customers to efficiently utilize the existing compensating devices and to reach an acceptable power factor required by the grid; and \emph{3)} we develop a distributed algorithm, fictitious play (FP), that is proven to converge to a mixed-strategy Nash equilibrium of the game, thus characterizing the solution under classical game theory and PT. In simulations, our studies show that insightful difference between classical and PT evaluations makes customers change the frequency with which they participate in reactive power compensation, in terms of achieving power factors. Our results also show that zonal compensation can be coordinated via the customers' perception of their Var gains as opposed to their Var losses, which can reduce the overall amount of data collected during reactive power compensation.

The remainder of the paper is organized as follows: Section~\ref{sec:prob} presents the system model and formulates the reactive power compensation as a noncooperative game. In Section~\ref{sec:pt}, we introduce a novel behavioral framework with PT considerations and in Section~\ref{sec:algo} we use FP to solve the game. Simulation results are presented in Section~\ref{sec:sim} while conclusions are drawn in Section \ref{sec:conc}.

\begin{table}[!t]\vspace{-0.2cm}
\small
  \centering
 \caption{
    \vspace*{-0.1cm}Summary of Notations}\vspace{-0.4cm}
\begin{tabular} {c||c}
Symbols & Description\\ \hline\hline 
$p$ & active power\\
$q$ & reactive power\\
$s$ & apparent power\\
$\phi$ & initial power factor (without any compensation)\\
$\widetilde \phi$ & power factor required by grid\\
$q^c$ & a customer's compensating Var value\\
$\widetilde q^c$ & a customer's compensating Var value required by grid\\
$i$ & a customer's index\\
$N$ & the total number of customers\\
$q_i^c(a_i)$ & customer $i$'s Var value using action $a_i$\\
$a_i$ & customer $i$'s (pure) action/strategy\\
$\sigma_i$ & customer $i$'s mixed strategy\\
$u_i$ & the utility of customer $i$'s pure strategy\\
$u_i^0$ & the utility reference point of customer $i$'s pure strategy\\
$U_i$ & customer $i$'s expected utility \\
$U_i^{\text{EUT}}$ & customer $i$'s expected utility under EUT \\
$U_i^{\text{PT}}$ & customer $i$'s expected utility under PT\\
$\tau$ & a penalty factor in Var exchange\\
$\alpha$ & weighting factors to capture gain distortions\\
$\beta$ & weighting factors to capture loss distortions\\
$k$ & aversion parameter to tune losses and gains\\
$m$ & the number of iterations\\
\end{tabular}\label{tab:symbol}\vspace{-0.6cm}
\end{table}

\section{Reactive Power Compensation Model and Game Formulation}\label{sec:prob}

In this section, we first introduce the reactive power compensation model and then, formulate a noncooperative game between the customers. The main notations are listed in Table~\ref{tab:symbol}.

\subsection{Reactive Power Compensation Model}

Consider a smart grid in which each customer has a variable reactive power compensation that depends on each customer's owned equipment~\cite{dixon2005reactive} and~\cite{PS00}. Let $\mathcal{N}$ be the set of all $N$ customers. In general, for reactive power compensation, a customer can install a capacitor or a voltage/current source to reduce the power losses and improve voltage regulation at the load terminals~\cite{dixon2005reactive}. The power company measures the active power and gives customers their optimized power factor (PF). However, existing Var compensation technologies, i.e., using a capacitor, cannot always guarantee reaching a fixed power factor, due to the varying inductive requirement and dynamical operation, i.e., capacitor switching time. Here, we assume that a customer $i \in \mathcal{N}$ requires active power $p_i \in \mathcal{P}$ and causes reactive power $q_i \in \mathcal{Q}$, and thus, its apparent power $s_i \in \mathcal{S}$ is $s_i^2=p_i^2+q_i^2$ and its current PF is $\phi_i \in \Phi$. Each customer will compensate the reactive power and increase its PF to a predefined PF $\widetilde \phi_i \in \widetilde \Phi$, as announced by the power company.

\begin{figure}[!t]
 \begin{center}
 \vspace{-0.1cm}
  \includegraphics[width=8cm]{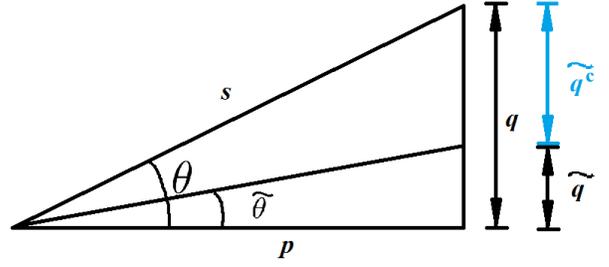}
 \vspace{-0.2cm}
   \caption{\label{fig:fig0} An illustrative example of reactive power compensation.}
\end{center}\vspace{-0.9cm}
\end{figure}

In general, the power factor relates to a phase angle and is defined by the ratio of active power (or real power) $p$ and apparent power $s$ as shown in Fig.~\ref{fig:fig0}. $q$ and $\widetilde q$ are, respectively, the actual reactive power and the required reactive power. Then, after reactive power compensation, customer $i$'s actual compensation is $q_i^c$. Here, we note that the power company will set a desired, compensation requirement/standard. In this regard, we use $\widetilde q_i^c$ to denote this required/standard reactive power compensation for each customer. Due to the delivery of AC power, there exists a capacitor between the power line and ground. In order to effectively deliver the active power and economical consideration, the power company requires customers to reduce their reactive power from $q$ to $\widetilde q$ via a predefined PF. In practice, it is hard to directly measure the PF because of the phase angle between voltage and current. Instead, the company can collect the energy usage of active and apparent power, and, then, send to the customers the tangent value of the angle between active and apparent power, i.e., PF. Hence, we assume that a customer will require a constant active power and varying reactive power, such that its PF can be easily received in the process of Var compensation~\cite{PS00}. From Fig.~\ref{fig:fig0}, we can compute the required Var compensation,
i.e., customer $i$'s required compensation $\widetilde q_i^c$, as follows:
\begin{equation}\label{eq:Var}
\begin{split}
\widetilde q^c_i(\phi_i, \widetilde \phi_i)&=
p_i \cdot \tan \theta_i-p_i\cdot \tan \widetilde \theta_i,\\
&=p_i \cdot \frac{\sqrt{1-\phi_i^2}}{\phi_i}-p_i\cdot \frac{\sqrt{1-\widetilde \phi_i^2}}{\widetilde \phi_i},
\end{split}
\end{equation}
where $\phi_i=\cos \theta_i$ and $\widetilde \phi_i=\cos \widetilde \theta_i$
In practice, it is hard to install new equipment for compensation and the customers can obtain a varying PF due to their over/under compensation. Thus, there might be a need for a Var coordination between customers so as to achieve optimal local compensation. In the studied scenario, it is necessary to devise a mechanism used to understand how the customers compensate reactive power, and how their usage of inductive loads impacts the overall system, in terms of Var benefits and costs. For example, a customer can have some inductive loads, such as speakers in an event or motors for pumping water and, thus, its reactive power requirement increases, as its PF decreases. In such a case, it is difficult to install new capacitors; instead, compensating reactive power from other nodes will be an efficient way to maintain the PF requirement. For example, some customers such as electrical vehicles and reactive power plants, can discharge power to increase PF for the total Var compensation $\sum_{i \in \mathcal{N}} \widetilde q_i^c$. Here, we assume that each customer can obtain/reach a PF via the existing compensating equipment. In this respect, customers will have different power requirements and can achieve a varying PF. Thus, the decisions made by customers will depend on such PF as the operation of the existing devices changes, even if they install new compensating devices. Next, we mainly study the competitive coordination between customers, using their reached PFs, which leads to a game-theoretic setting as discussed next.

\subsection{Noncooperative Game Formulation}

We analyze the operations of compensating reactive power between customers using noncooperative game theory~\cite{GT00}. As previous discussed, the customers compensate reactive power based on their existing equipment. Then, they must make a decision on whether to sell (buy) reactive power to (from) the grid. For the studied model, the compensation value is the reactive power difference between initial PF and the compensated PF reached by customers. For example, the power company allows some customers to buy reactive power from those supplying extra Var compensation, as total Var compensation is satisfied~\cite{Varchange1, Varchange2, Varchange3}. Thus, customer $i$ can compensate reactive power $q_i^c$ and reach its PF $\phi_i^c$, while the power company announces the standard PF $\widetilde \phi_i$.

When coordinating reactive power compensation, customers can \emph{interdependently} determine how much reactive power must be compensated (i.e., Var). We can formulate a static noncooperative game in strategic form $\Xi=[\mathcal{N}, \{\mathcal{A}_i\}_{i\in\mathcal{N}}, \{u_i\}_{i\in\mathcal{N}}]$, that is characterized by three main elements: \emph{1)} the \emph{players} which are the customers in the set $\mathcal{N}$, \emph{2)} the \emph{strategy} or \emph{action} $\mathcal{A}_i:=(\phi_i, 1]$, which represents customer $i$'s achieved PF, and \emph{3)} the \emph{utility function} $u_i$ of any player $i \in \mathcal{N}$, which captures the benefit-cost tradeoffs associated with the different choices. In particular, we hereinafter assume a discrete strategy set. The value of the utility function achieved by a customer $i$ that chooses an action $a_i$ is given by:
\begin{equation}\label{eq:utility}
u_i(a_i, \boldsymbol{a}_{-i})=B_i(a_i, \boldsymbol{a}_{-i})-C_i(a_i, \boldsymbol{a}_{-i}),
\end{equation}
where $\boldsymbol{a}_{-i} = [a_1, a_2, \dots, a_{i-1}, \dots,a_{i+1}, \dots, a_N]$ is the vector of actions of all players other than $i$, $B_i(a_i, \boldsymbol{a}_{-i})$ is the Var benefits customer $i$ obtained if it provides surplus Var to the grid, and $C_i(a_i, \boldsymbol{a}_{-i})$ is the cost in Var coordination. In practice, customer $i$'s action $a_i$ corresponds to deciding on whether to sell or buy reactive power. Then, compared to the required Var compensation $\widetilde q^c$ in (\ref{eq:Var}), customer $i$ will compensate reactive power as follows:
\begin{equation*}
q_i^c(a_i)=q_i^c(\phi_i, a_i)=p_i\cdot \frac{\sqrt{1-\phi_i^2}}{\phi_i}-p_i\cdot \frac{\sqrt{1-a_i^2}}{a_i}.
\end{equation*}
Here, before we study the benefits and costs using reactive power, we first design a Var coordination exchange between customers:
\begin{equation}\label{eq:sharing}
E_i(a_i, \boldsymbol{a}_{-i})=q_i^c(a_i)-\frac{\sum_{j \in \mathcal{N}}q_j^c(a_j)}{N}.
\end{equation}
In particular, $E_i(\cdot)$ is the Var difference between customer $i$ and the average compensation. In (\ref{eq:sharing}), customer $i$'s compensating quantity $q_i^c$ depends on its action $a_i$, i.e., $q_i^c(a_i)$ as the customers' interactions are captured through $q_i^c$. Due to the fact that the total Var compensation $\sum_{i \in \mathcal{N}}q_i^c(a_i)$ is affected by other customers, $E_i(\cdot)$ can have a negative value even if customer $i$'s Var compensation exceeds its standard, i.e., $a_i>\widetilde \phi_i$. Using (\ref{eq:sharing}), the benefit of Var exchange will be:
\begin{equation}\label{eq:Bu}
B(a_i, \boldsymbol{a}_{-i})=
\begin{cases}
E_i(a_i, \boldsymbol{a}_{-i}) &\textrm{if } a_i \ge \widetilde \phi_i \textrm{ and} \sum\limits_{i \in \mathcal{N}}q_i^c \ge \sum\limits_{i \in \mathcal{N}}\widetilde q_i^c, \\
0&\textrm{otherwise.}
\end{cases}
\end{equation}
Moreover, the cost incurred by customer $i$ is
\begin{equation}\label{eq:Cu}
C(a_i, \boldsymbol{a}_{-i})=
\begin{cases}
\tau_i(q_i^c-\widetilde q_i^c)^+ &\textrm{if } a_i \ge \widetilde \phi_i \textrm{ and} \sum\limits_{i \in \mathcal{N}}q_i^c \ge \sum\limits_{i \in \mathcal{N}}\widetilde q_i^c, \\
-E_i(a_i, \boldsymbol{a}_{-i}) &\textrm{if } a_i < \widetilde \phi_i \textrm{ and} \sum\limits_{i \in \mathcal{N}}q_i^c \ge \sum\limits_{i \in \mathcal{N}}\widetilde q_i^c, \\
q_i^c &\textrm{otherwise,}
\end{cases}
\end{equation}
where $(F)^+=\max \{0,F\}$ and $0 \le \tau_i\le 1$ is a penalty factor that weighs the losses of customer $i$ when its Var compensation is greater than the standard compensation $\widetilde q_i^c$.

The utility function in (\ref{eq:utility}) captures both the Var benefit as well as the associated costs of having a high PF. Here, when a customer $i$ requires large reactive power and decreases its PF, i.e., $q_i^c<\widetilde q_i^c$ or $a_i<\widetilde \phi_i$, its benefit in (\ref{eq:Bu}) is zero and its utility is $E_i(a_i, \boldsymbol{a}_{-i})$, while total compensation satisfies power system requirement. In particular, its utility depends on the reactive power coordination, and its Var payment would be given to those who provide extra reactive power compensation. On the other hand, if a customer has a high PF, i.e., $a_i>\widetilde \phi_i$ and provides extra reactive power, it might obtain a benefit due to (\ref{eq:sharing}). By using a high PF, such as by over compensating the PF to $0.95$, one might increase the system voltage~\cite{PS00} and then cause voltage oscillation in the grid. Such an extreme high voltage resulting from the overcompensation will endangers the usage of equipment. Thus, a penalty in (\ref{eq:Cu}) limits the extreme case, if all users pursue high PFs. Without loss generality, we also consider another case upon which the total reactive power compensated by all customers cannot meet the total Var requirement, i.e., $\sum\limits_{i \in \mathcal{N}}\widetilde q_i^c$, and assume that all customers lose their Var values in compensation.

\section{Prospect Theory for Reactive Power Compensation}\label{sec:pt}

In this section, we first study a conventional game solution using expected utility theory to understand how the reactive power compensation game can reach an equilibrium. Then, using prospect theory, we analyze the impact of customer behavior on this game, when customers frame their utility values with respect to a reference point.

\subsection{Reactive Power Compensation under Expected Utility Theory}

Owing to the varying active/reactive power requirement (i.e., charging/discharging, voltage support, and inductive load usage), the PF reached by a customer is not a fixed constant. Also, due to the continuous operation time for the compensation equipment (i.e., switching diodes), the PFs reached after customer compensations are not discrete values but continuous. However, the PF announced by the power company falls within a discrete sample space whose distribution can be specified by a probability mass function. Here, we assume that customers can make probabilistic choices over their discrete strategies and therefore, we are interested in studying the game under \emph{mixed strategies}~\cite{GT00} rather than under \emph{pure, deterministic strategies}. Intuitively, a mixed strategy is a probabilistic choice that captures how frequently a customer will choose a given pure strategy. Such assumption of the mixed, probabilistic choices is motivated by the following factors: \emph{1)} a probability or frequency can represent how often a customer reaches a power factor, and one can better understand how such operations will occur over a large period of time, and \emph{2)} a customer would avoid providing individual power factor compensation information so as to compete with its opponents. In this respect, let $\boldsymbol{\sigma}=[\sigma_1, \sigma_2, \dots, \sigma_N]$ be the vector of all mixed strategies. For customer $i$, its $\sigma_i(a_i) \in \Gamma_i$ is the probability corresponding to its pure strategy $a_i \in \mathcal{A}_i$, where $\Gamma_i$ is the set of mixed strategy available to customer $i$.

In traditional game theory~\cite{GT00}, it is assumed that a player makes rational decisions. Such rational decisions/actions imply that, each player will objectively choose its mixed strategy vector so as to optimize its own utility. Indeed, under the conventional expected utility theory, the utility of each customer is simply the expected value over its mixed strategies and thus, for any player $i \in \mathcal{N}$, its EUT utility is given by:
\begin{equation}\label{eq:multiplayerET}
U_i^{\text{EUT}}( \boldsymbol{\sigma})=\sum_{\boldsymbol{a} \in \mathcal{A}}\bigg(\prod_{j=1}^N \sigma_j(a_j)\bigg) u_i(a_i, \boldsymbol{a}_{-i}),
\end{equation}
where $\boldsymbol{a}$ is a vector of all chosen/played pure strategies and $\mathcal{A}=\mathcal{A}_1 \times \mathcal{A}_2 \times \dots \times \mathcal{A}_N$.

\subsection{Reactive Power Compensation under Prospect Theory}\label{subsec:pt}

Using the game-theoretic formulation in (\ref{eq:multiplayerET}), a player can assess its expected utility, where customers can objectively evaluate Var payoff under EUT. However, because each customer evaluates its economic benefits differently, such a subjective perception will impact the overall results of the reactive power compensation game. For example, for a $1$ kW house usage, the compensation of $100$ Var may be considered by a customer (i.e., require Var from grid), while such $100$ Var might not enable a factory with $100$ kW power requirement to buy Var from grid, due to the small impact on PF. Indeed, due to the different viewpoints on a same Var value, i.e., $100$ Var, a small power customer will prefer to compensate reactive power, while a large power customer might ignore a strategy that small customers choose in compensation. Thus, customers can make subjective evaluations that result in a deviation from the utility in (\ref{eq:utility}). A customer's evaluation can consist of both gains and losses, when it admits a criterion. In particular, the gain (loss) is a positive (negative) value in (\ref{eq:utility}), as the criterion is $0$ for EUT. Therefore, the difference between the subjective evaluation and classical, objective utility in (\ref{eq:utility}) requires one to develop a new framework that can analyze the compensation problem in a smart grid.

To study the customer's behavior, several empirical studies~\cite{kahneman1979prospect, prelec1998probability, PT01, PT02} have analyzed how customer behavior affects a noncooperative game. In a decision-making process, a player can evaluate its utility based on a reference, which represents how this player measures gains and losses with respect to a certain economic reference or framing point (e.g., a level of ``wealth''). To capture how losses loom larger than gains under the perception of customers, one can map/transform the objective utility functions into subjective value functions and, this transformation is the so-called framing effect. In particular, when a customer makes a decision, it will subjectively evaluate its utility, i.e., based on its perception on the Var units of reactive power compensation. Then, over-compensation and under-compensation might lead to specific operational gains or losses. How such gains and losses are evaluated will be given as a new different, customer-dependent utility, i.e., $u_i^{\text{PT}}$. Thus, taking into account a reference point and how benefits and costs are evaluated by each customer, the expression of the utility will be different from that of EUT in (\ref{eq:utility}).

In order to capture the effect of such evaluation, we will use prospect theory~\cite{kahneman1979prospect}. In particular, prospect theory allows framing the utilities based on the following criteria: \emph{1)} Reference point: a player can evaluate its utility using its own individual reference point and such evaluation represents how players act differently via a possibly similar utility value (i.e., a same $\$ 100$ can be evaluated differently by a rich individual compared to a poor individual); \emph{2)} Gain/loss aversion: a player has different attitudes for given a value when it corresponds to a gain as opposed to when it corresponds to a loss; and \emph{3)} Diminishing sensitivity: a player is risk averse in large gain values and risk seeking in small losses. Using these three notions, for each player $i \in \mathcal{N}$, we can review the utility function in (\ref{eq:utility}) and construct a behavioral utility function that can allow the players to evaluate both gains and losses, with the realistic consideration of a utility reference point~\cite{tversky1992advances}:
\begin{equation}\label{eq:uPT}
u_i^{\text{PT}}(\boldsymbol{a})=
\begin{cases}
\bigg(u_i(\boldsymbol{a})-u_i^0(\boldsymbol{a}^0)\bigg)^{\alpha_i} &\textrm{\!\!\!\!\!\!if } u_i(\boldsymbol{a}) \ge u_i^0(\boldsymbol{a}^0), \\
-k_i\bigg(u_i^0(\boldsymbol{a}^0)-u_i(\boldsymbol{a})\bigg)^{\beta_i} &\textrm{\!\!\!\!\!\!otherwise,}
\end{cases}
\end{equation}
where $u_i^0(\boldsymbol{a}^0)=u_i(\boldsymbol{a}^0)$ is the utility reference point based on the strategy vector $\boldsymbol{a}^0$, the weighting factors $\alpha_i, \beta_i \in (0,1]$ respectively capture the gain and loss distortions, and $k_i >0$ is an aversion parameter to tune the impact difference between losses and gains. In this respect, the utility in (\ref{eq:uPT}) is a desired S-shape function and it is concave for gains and convex for losses~\cite{tversky1981framing}. Based on the reference point, smaller $\alpha_i, \beta_i$ will cause a greater distortion in gain and loss magnitudes. Moreover, when $k_i>1$,
player $i$ evaluation will have a stronger impact on its loss than its gain, termed as the case ``loss aversion''~\cite{PTbook01}. Compared to the EUT utility
function in (\ref{eq:multiplayerET}), the expected utility under PT framing is:
\begin{equation}\label{eq:multiplayerPT}
U_i^{\text{PT}}( \boldsymbol{\sigma})=\sum_{\boldsymbol{a} \in \mathcal{A}}\bigg(\prod_{j=1}^N \sigma_j(a_j)\bigg) u_i^{\text{PT}}(a_i, \boldsymbol{a}_{-i}),
\end{equation}
where $\boldsymbol{a}$ is the choosing action vector, as mentioned in (\ref{eq:multiplayerET}), and $u_i^{\text{PT}}$ is the PT pure utility of the action combination.

\section{Game Solution and Proposed Algorithm}\label{sec:algo}

Next, we first show the existence of a mixed NE for the proposed game and then, we prove that using an FP-based algorithm customers can reach a mixed NE in our model.

In (\ref{eq:multiplayerET}) or (\ref{eq:multiplayerPT}), we show the expected utility using the set of mixed strategy over the action set $\mathcal{A}_i$ of each player $i$. The game-theoretic solution for both EUT and PT can be characterized by the concept of a \emph{mixed-strategy Nash equilibrium}:
\begin{definition}
A mixed strategy profile $\boldsymbol{\sigma}^* $ is said to be a mixed strategy Nash equilibrium if, for each player $i \in \mathcal{N}$, we have:
\begin{equation}\label{eq:ne}
U_i(\boldsymbol{\sigma}_i^*,\boldsymbol{\sigma}_{-i}^*) \ge U_i(\boldsymbol{\sigma}_i,\boldsymbol{\sigma}_{-i}^*), \  \forall \boldsymbol{\sigma}_i \in  \Gamma_i.
\end{equation}
Note that the mixed-strategy Nash equilibrium defined in (\ref{eq:ne}) is applicable for both EUT and PT; the difference would be in whether one is using (\ref{eq:multiplayerET}) or (\ref{eq:multiplayerPT}), respectively.
\end{definition}

\begin{lemma}\label{th:exiN}
For the proposed reactive power compensation game, there exists at least one mixed strategy Nash equilibrium for PT.
\end{lemma}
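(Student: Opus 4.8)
The plan is to recognize that, despite the nonlinear framing transformation in (\ref{eq:uPT}), the PT version of the game is still a \emph{finite} strategic-form game, so that Nash's classical existence theorem — equivalently, Kakutani's fixed-point theorem applied to the best-response correspondence — yields the claim. First I would record the structural facts. By the modeling assumption in Section~\ref{sec:prob}, each action set $\mathcal{A}_i$ is discrete and finite; hence the mixed-strategy set $\Gamma_i$ is the probability simplex over $\mathcal{A}_i$, which is a nonempty, convex, compact subset of a Euclidean space, and the joint set $\prod_{i\in\mathcal{N}}\Gamma_i$ inherits these properties.

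Next I would verify that the PT payoff $u_i^{\text{PT}}(\boldsymbol{a})$ is a well-defined, finite real number for every pure profile $\boldsymbol{a}\in\mathcal{A}$: the case split in (\ref{eq:uPT}) guarantees that the base of each power is nonnegative, and since $\alpha_i,\beta_i\in(0,1]$ and $k_i>0$, the map $u_i\mapsto u_i^{\text{PT}}$ is continuous and bounded on the finite range of $u_i$. Consequently $U_i^{\text{PT}}(\boldsymbol{\sigma})$ in (\ref{eq:multiplayerPT}) is a finite multilinear polynomial in the entries of $\boldsymbol{\sigma}$; in particular it is jointly continuous in $\boldsymbol{\sigma}$ and, for each fixed $\boldsymbol{\sigma}_{-i}$, affine — hence quasi-concave — in $\sigma_i$.

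With these ingredients I would then invoke the Debreu--Glicksberg--Fan / Nash existence argument. The best-response correspondence $\boldsymbol{\sigma}\mapsto\prod_{i\in\mathcal{N}}\arg\max_{\sigma_i\in\Gamma_i}U_i^{\text{PT}}(\sigma_i,\boldsymbol{\sigma}_{-i})$ is nonempty-, convex-, and closed-valued and upper hemicontinuous (Berge's maximum theorem, using continuity of $U_i^{\text{PT}}$ and compactness of $\Gamma_i$), so Kakutani's fixed-point theorem provides a fixed point $\boldsymbol{\sigma}^*$, which by construction satisfies (\ref{eq:ne}) with $U_i=U_i^{\text{PT}}$ and is therefore a mixed-strategy NE of the PT game. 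The identical argument, with $u_i$ in place of $u_i^{\text{PT}}$, covers the EUT game.

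I do not expect a genuine obstacle; the only point deserving emphasis — and the one I would highlight explicitly — is that the framing transformation (\ref{eq:uPT}) does not destroy any hypothesis of Nash's theorem: it leaves the action sets finite and the transformed payoffs real-valued and bounded, so passing to mixed strategies still yields continuous expected utilities that are affine in each player's own strategy. Thus the only ``work'' is to confirm finiteness, continuity, and player-wise quasi-concavity, after which existence is immediate.
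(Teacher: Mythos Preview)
Your proposal is correct and follows essentially the same route as the paper: both arguments observe that the PT framing in (\ref{eq:uPT}) merely transforms the pure-strategy payoffs without altering the finite player set or finite action sets, so Nash's existence theorem for finite strategic-form games applies directly. Your version is more explicit in verifying the Kakutani hypotheses (compactness, continuity, own-strategy affinity/quasi-concavity), but the underlying idea is identical to the paper's.
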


\begin{proof}
In the proposed game, a player will assess the objective utility and follow an EUT strategy using (\ref{eq:utility}) and (\ref{eq:multiplayerET}), while it makes a PT-based decision in (\ref{eq:multiplayerPT}) via estimating the subjective tradeoffs in (\ref{eq:uPT}). Under EUT, it has been shown that there exists at least one mixed strategy Nash equilibrium in a game with a finite number of players, in which each player can choose from finitely many pure strategies. Under PT, both the number of players and the number of their pure strategies do not change. Then, for each pure strategy, the PT utility only reconstructs underlying EUT value; therefore, there exists at least one mixed NE in the PT game, as well as its existence in EUT.
\end{proof}

\begin{corollary}\label{co:pure1}
If no customer reaches the predefined PF in the reactive power compensation game, i.e., $a_i < \widetilde \phi_i, \forall i \in \mathcal{N}$, there exists a \emph{unique}, \emph{pure} Nash equilibrium for both EUT and PT.
\end{corollary}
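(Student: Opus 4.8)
The plan is to show that the hypothesis forces the game into a degenerate form in which every customer has a \emph{strictly dominant} pure strategy; existence and uniqueness of a pure Nash equilibrium then follow at once, and because prospect-theoretic framing merely composes each customer's payoff with a strictly increasing scalar map, the same profile remains the unique pure NE under PT.

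First I would determine which branches of~(\ref{eq:Bu}) and~(\ref{eq:Cu}) are active. The function $f(x)=\sqrt{1-x^{2}}/x$ has $f'(x)=-1/\big(x^{2}\sqrt{1-x^{2}}\big)<0$ on $(0,1)$, so $q_i^c(a_i)=p_i\big(f(\phi_i)-f(a_i)\big)$ is strictly increasing in $a_i$ on $(\phi_i,1]$; comparing~(\ref{eq:Var}) with the expression for $q_i^c(a_i)$ moreover gives $\widetilde q_i^c=q_i^c(\widetilde\phi_i)$. Hence $a_i<\widetilde\phi_i$ for all $i$ yields $q_i^c(a_i)<\widetilde q_i^c$ for every $i$, and therefore $\sum_{i\in\mathcal N}q_i^c(a_i)<\sum_{i\in\mathcal N}\widetilde q_i^c$. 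This puts every customer in the last (``otherwise'') case of~(\ref{eq:Bu}), so $B_i=0$, and in the last case of~(\ref{eq:Cu}), so $C_i=q_i^c(a_i)$; thus~(\ref{eq:utility}) collapses to $u_i(a_i,\boldsymbol a_{-i})=-q_i^c(a_i)$, which does not depend on $\boldsymbol a_{-i}$.

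Next I would read off the equilibrium. Since $\mathcal A_i$ is a finite subset of $(\phi_i,\widetilde\phi_i)$ and $q_i^c$ is strictly increasing, $\mathcal A_i$ has a unique least element $a_i^\star$, and $u_i(a_i^\star,\cdot)>u_i(a_i,\cdot)$ for every other $a_i\in\mathcal A_i$; that is, $a_i^\star$ strictly dominates all of customer $i$'s other actions. Consequently $\boldsymbol a^\star=(a_1^\star,\dots,a_N^\star)$ is a pure Nash equilibrium, and in fact the only Nash equilibrium at all (pure or mixed), because at any equilibrium each customer must play a best response and the unique best response against every (possibly mixed) opponent profile is $a_i^\star$. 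This settles the EUT case.

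Finally I would transfer this to PT. For a fixed reference point $u_i^0=u_i(\boldsymbol a^0)$, set $g_i(u)=(u-u_i^0)^{\alpha_i}$ for $u\ge u_i^0$ and $g_i(u)=-k_i(u_i^0-u)^{\beta_i}$ for $u<u_i^0$; both branches vanish at $u=u_i^0$ and have positive one-sided derivatives, namely $\alpha_i(u-u_i^0)^{\alpha_i-1}$ and $k_i\beta_i(u_i^0-u)^{\beta_i-1}$, so $g_i$ is continuous and strictly increasing in $u$, and $u_i^{\text{PT}}(\boldsymbol a)=g_i\big(u_i(\boldsymbol a)\big)$ by~(\ref{eq:uPT}). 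A strictly increasing per-player transformation of the payoffs leaves every best-response set unchanged, hence the equilibrium set unchanged, so $\boldsymbol a^\star$ is still the unique pure NE under PT. I expect the only points needing care to be the sign of $f'$ (a one-line computation) and the verification that $g_i$ is increasing across its kink at the reference point for all admissible $\alpha_i,\beta_i\in(0,1]$ and $k_i>0$ — neither is a real obstacle; and if one reads the statement as asserting that $\boldsymbol a^\star$ is the unique pure NE of the full, unrestricted game, one must additionally check that no deviation to some $a_i\ge\widetilde\phi_i$ is profitable, although the corollary's hypothesis confines the analysis to the region $a_i<\widetilde\phi_i$.
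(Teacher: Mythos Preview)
Your proof is correct and follows essentially the same line as the paper: reduce the hypothesis to $u_i=-q_i^c(a_i)$ via the ``otherwise'' branches of (\ref{eq:Bu})--(\ref{eq:Cu}), observe this is strictly decreasing in $a_i$, and conclude that the smallest action in each $\mathcal A_i$ is strictly dominant, yielding the unique pure NE. The only minor difference is that for the PT case the paper recomputes $\partial u_i^{\text{PT}}/\partial a_i$ branch by branch, whereas you invoke the cleaner observation that $u_i^{\text{PT}}=g_i\circ u_i$ with $g_i$ strictly increasing, so best responses (and hence the equilibrium) are preserved---this is a cosmetic improvement, not a different route.
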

\begin{proof}
In this case, the total reactive power compensated by all customers does not meet the
total compensation requirement using (\ref{eq:utility}), (\ref{eq:sharing}), (\ref{eq:Bu}) and (\ref{eq:Cu}). In particular, $u_i=-q_i^c(a_i)$ for all customers. For EUT, we have
\begin{equation}
\begin{split}
\frac{\partial u_i}{\partial a_i} =&\frac{\partial u_i}{\partial q_i^c(a_i)} \cdot \frac{\partial q_i^c(a_i)}{\partial a_i} \\
=&-p_i \cdot \frac{1}{\sqrt{1-a_i^2}\cdot a_i^2}<0,
\end{split}
\end{equation}
where $\phi_i<a_i<\widetilde \phi_i$. Thus, player $i$ will follow a dominant strategy\footnote{A strategy is said to be a dominant strategy for a player if it yields the best utility (for that player) no matter what strategies the other players choose.}, i.e., $a_i^{\text{min}}$. Then all EUT customers will choose their dominant strategies as a unique, pure NE. Similarly, for PT
\begin{equation}\label{eq:framvalue}
\frac{\partial u_i^{\text{PT}}}{\partial a_i} =
\begin{cases}
-\alpha_i p_i \cdot \frac{1}{\sqrt{1-a_i^2}\cdot a_i^2}  \cdot (u_i(\boldsymbol{a})-u_0^0(\boldsymbol{a}^0))^{\alpha_i-1} \\
\qquad \qquad \qquad \qquad \qquad \textrm{ if } u_i(\boldsymbol{a}) \ge u_i^0(\boldsymbol{a}^0), \\
-k_i\beta_i p_i \cdot \frac{1}{\sqrt{1-a_i^2}\cdot a_i^2} \cdot (u_0^0(\boldsymbol{a}^0)-u_i(\boldsymbol{a}))^{\beta_i-1}\\ \qquad \qquad \qquad \qquad \qquad \textrm{ otherwise.}
\end{cases}
\end{equation}
Here, both $(u_i(\boldsymbol{a})-u_0^0(\boldsymbol{a}^0))^{\alpha_i-1}$ and $(u_0^0(\boldsymbol{a}^0)-u_i(\boldsymbol{a}))^{\beta_i-1}$ are greater than $0$. Hence, $\frac{\partial u_i^{\text{PT}}}{\partial a_i}<0$ and all PT customers will choose the dominant strategy as a unique, pure NE strategy. In particular, the unique, pure strategy is to choose the minimum PF strategy, i.e., $a_i^{\text{min}}$, in the strategy set.
\end{proof}
\begin{corollary}\label{co:pure2}
If all customers exceed the predefined PF in the reactive power compensation game, i.e., $a_i > \widetilde \phi_i, \forall i \in \mathcal{N}$, and the penalty factor will not be equal to the ratio of all customers minus one (i.e., without customer $i$) to the total number of customers, i.e., $\tau_i \neq \frac{N-1}{N}, \forall i$, there exists a \emph{unique}, \emph{pure} Nash equilibrium for both EUT and PT.
\end{corollary}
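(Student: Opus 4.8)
The plan is to mirror the argument used for Corollary~\ref{co:pure1}: under the stated hypotheses I will show that every customer has a \emph{strict} dominant strategy, so the unique pure Nash equilibrium is simply the profile in which each customer plays that strategy. First I would observe that when $a_i>\widetilde\phi_i$ for all $i$, monotonicity of $q_i^c(\phi_i,\cdot)$ gives $q_i^c(a_i)>\widetilde q_i^c$ for every $i$, hence $\sum_{i\in\mathcal N}q_i^c\ge\sum_{i\in\mathcal N}\widetilde q_i^c$; consequently the first branch of both (\ref{eq:Bu}) and (\ref{eq:Cu}) is active, so $B_i=E_i(a_i,\boldsymbol a_{-i})$ and $C_i=\tau_i\big(q_i^c(a_i)-\widetilde q_i^c\big)$. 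Using (\ref{eq:sharing}) and separating customer $i$'s own share of the average from the rest, this yields
\[
u_i(a_i,\boldsymbol a_{-i})=\Big(\tfrac{N-1}{N}-\tau_i\Big)q_i^c(a_i)-\frac{1}{N}\sum_{j\neq i}q_j^c(a_j)+\tau_i\widetilde q_i^c .
\]

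Next I would differentiate with respect to $a_i$: only the first term depends on $a_i$, and from the computation already performed in Corollary~\ref{co:pure1} we have $\partial q_i^c(a_i)/\partial a_i=p_i/\big(\sqrt{1-a_i^2}\,a_i^2\big)>0$, so
\[
\frac{\partial u_i}{\partial a_i}=\Big(\tfrac{N-1}{N}-\tau_i\Big)\frac{p_i}{\sqrt{1-a_i^2}\,a_i^2}.
\]
The hypothesis $\tau_i\neq\frac{N-1}{N}$ is exactly what makes this derivative nonzero and of constant sign over the admissible range $a_i\in(\widetilde\phi_i,1]$, so $u_i$ is strictly monotone in $a_i$ irrespective of $\boldsymbol a_{-i}$: if $\tau_i<\frac{N-1}{N}$ the dominant strategy is the largest available PF, and if $\tau_i>\frac{N-1}{N}$ it is the smallest PF strictly above $\widetilde\phi_i$. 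Hence the profile of dominant strategies is the unique pure NE under EUT.

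For PT I would note that the framing map (\ref{eq:uPT}), viewed as a function of the scalar $u_i$, is strictly increasing: it equals $(u_i-u_i^0)^{\alpha_i}$ above the reference point and $-k_i(u_i^0-u_i)^{\beta_i}$ below it, and both pieces increase toward $0$ as $u_i\to u_i^0$. Composing this globally increasing map with the strictly monotone $a_i\mapsto u_i(a_i,\boldsymbol a_{-i})$ keeps $a_i\mapsto u_i^{\text{PT}}(a_i,\boldsymbol a_{-i})$ strictly monotone in the same direction; equivalently, the chain-rule expression (\ref{eq:framvalue}) shows that $\partial u_i^{\text{PT}}/\partial a_i$ carries the sign of $\partial u_i/\partial a_i$, since the factors $\alpha_i(u_i-u_i^0)^{\alpha_i-1}$ and $k_i\beta_i(u_i^0-u_i)^{\beta_i-1}$ are positive. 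Therefore each customer retains the same dominant strategy under PT, and that profile is again the unique pure NE.

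The main obstacle I anticipate is bookkeeping rather than depth: correctly isolating the $\tfrac1N$ self-contribution hidden inside the average in (\ref{eq:sharing}), so that the coefficient of $q_i^c(a_i)$ comes out as $\tfrac{N-1}{N}-\tau_i$ — this is precisely where the threshold $\tfrac{N-1}{N}$ in the hypothesis comes from — and making sure the PT step does not covertly assume that $u_i-u_i^0$ keeps a fixed sign across the strategy set, which it need not, since the framing map is globally increasing and the composition argument absorbs any sign change of $u_i-u_i^0$ without trouble.
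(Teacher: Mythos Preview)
Your proposal is correct and follows essentially the same approach as the paper: isolate the self-contribution in the average to obtain $u_i=(\tfrac{N-1}{N}-\tau_i)q_i^c(a_i)+\text{terms independent of }a_i$, differentiate, and use $\tau_i\neq\tfrac{N-1}{N}$ to get strict monotonicity and hence a dominant strategy for each player under both EUT and PT. Your composition argument for the PT case (the framing map is globally increasing in $u_i$) is a slightly cleaner variant of the paper's chain-rule computation and, as you note, sidesteps any worry about the sign of $u_i-u_i^0$ changing across the strategy set.
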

\begin{proof}
In this case, the utility of player $i$ is:
\begin{equation}
\begin{split}
u_i=&q_i^c(a_i)-\frac{\sum_{i \in \mathcal{N}}q_i^c(a_i)}{N}-\tau_i\biggl(q_i^c(a_i)-\widetilde q_i^c(\widetilde \phi_i)\biggr)\\
=&(\frac{N-1}{N}-\tau_i)q_i^c(a_i)-\frac{\sum_{l \neq i, l \in \mathcal{N}}q_l^c(a_l)}{N}+\tau_i \widetilde q_i^c(\widetilde \phi_i)
\end{split}
\end{equation}
Under both EUT and PT, the utility derivatives on player $i$'s strategy are given by:
\begin{equation}
\begin{split}
\frac{\partial u_i}{\partial a_i} =&(\frac{N-1}{N}-\tau_i)\cdot p_i \cdot \frac{1}{\sqrt{1-a_i^2}\cdot a_i^2},\\
\frac{\partial u_i^{\text{PT}}}{\partial a_i} =&\begin{cases}
(\frac{N-1}{N}-\tau_i)\cdot \alpha_i p_i \cdot \frac{1}{\sqrt{1-a_i^2}\cdot a_i^2} \cdot (u_i(\boldsymbol{a})-u_0^0(\boldsymbol{a}^0))^{\alpha_i-1} \\
\qquad \qquad \qquad \qquad \qquad \textrm{ if } u_i(\boldsymbol{a}) \ge u_i^0(\boldsymbol{a}^0), \\
(\frac{N-1}{N}-\tau_i)\cdot k_i\beta_i p_i \cdot \frac{1}{\sqrt{1-a_i^2}\cdot a_i^2} \cdot (u_0^0(\boldsymbol{a}^0)-u_i(\boldsymbol{a}))^{\beta_i-1}\\ \qquad \qquad \qquad \qquad \qquad \textrm{ otherwise.}
\end{cases}
\end{split}
\end{equation}
Thus, as $\tau_i \neq \frac{N-1}{N}$, both EUT and PT utilities are monotonic function on $a_i$, and all players will choose their dominant strategies as a unique, pure NE. In particular, \emph{1)} when $\tau_i < \frac{N-1}{N}$, the NE is the maximum PF strategy set, \emph{2)} when $\tau_i > \frac{N-1}{N}$, the NE is the minimum PF strategy set, \emph{3)} when $\tau_i = \frac{N-1}{N}$, the mixed NEs are not unique.

The ratio $\frac{N-1}{N}$ is a value that depends only on the number of customers. It captures how the extra compensation of one customer will be shared by the others. Indeed, based on various conditions related to $\tau$, Corollary 2 shows a specific case in which customers choose the same NE under both EUT and PT.
\end{proof}

Corollary~\ref{co:pure1} and Corollary~\ref{co:pure2} mainly provide the analysis when the total customers' compensation is strictly less/greater than the total standard compensation. Next, we will study a two-customer case, when one does not satisfy its compensation requirement and exactly requires compensation from the other one.
\begin{corollary}\label{co:pure3}
For a two-customer reactive power compensation game, if both customers require different power and exceed the predefined PF $\widetilde \phi_1=\widetilde \phi_2=\widetilde \phi$ using a pair of actions, i.e., $A_i=\{v_1, v_2\}$ $(v_1<v_2)$, $\frac{v_1+v_2}{2}=\widetilde \phi$, then, there exists a \emph{unique}, \emph{mixed} Nash equilibrium for both EUT and PT.
\end{corollary}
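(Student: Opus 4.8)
The plan is to collapse Corollary~\ref{co:pure3} to a completely explicit $2\times 2$ bimatrix game and then invoke two standard facts: Nash's theorem gives existence (this is already Lemma~\ref{th:exiN}, applied to both the EUT payoffs and the PT payoffs), and a $2\times 2$ game with \emph{no} pure-strategy equilibrium has \emph{exactly one} equilibrium, which is completely mixed (the two best-response correspondences are crossing step functions that meet at a single interior point, see~\cite{GT00}). So the real content is (i) to show this particular $2\times 2$ game has no pure equilibrium, and (ii) to transfer that conclusion from EUT to PT.

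First I would tabulate the four pure profiles on $\mathcal{A}_1\times\mathcal{A}_2=\{v_1,v_2\}^2$ and, for each, read off the active branch of~(\ref{eq:Bu}) and~(\ref{eq:Cu}). Writing $q_i^c(a)=p_i\big(g(\phi_i)-g(a)\big)$ with $g(a)=\sqrt{1-a^2}/a$ strictly decreasing, the hypotheses $v_1<\widetilde\phi<v_2$ and $\frac{v_1+v_2}{2}=\widetilde\phi$ yield $q_i^c(v_1)<\widetilde q_i^c<q_i^c(v_2)$ for both $i$. Hence $(v_1,v_1)$ falls in the ``otherwise'' branch (the aggregate is strictly below $\sum_i\widetilde q_i^c$, so $u_i=-q_i^c(v_1)$, exactly the situation of Corollary~\ref{co:pure1}); $(v_2,v_2)$ falls in the first branch, so $u_i=E_i-\tau_i\big(q_i^c(v_2)-\widetilde q_i^c\big)$ with $E_i$ from~(\ref{eq:sharing}); and in the asymmetric profiles the sign of $\sum_i q_i^c-\sum_i\widetilde q_i^c$ is the sign of $p_1 A-p_2 B$ for $(v_2,v_1)$ and of $p_2 A-p_1 B$ for $(v_1,v_2)$, where $A=g(\widetilde\phi)-g(v_2)>0$ and $B=g(v_1)-g(\widetilde\phi)>0$ depend only on $v_1,v_2,\widetilde\phi$. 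This is exactly where ``the two customers require different power'' enters: when $p_1\ne p_2$ the overcompensation and the deficit no longer cancel at the aggregate level, so the two asymmetric cells cannot both sit in a branch that would render some pure profile self-enforcing.

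With the bimatrix in hand I would go cell by cell and exhibit in each a player whose unilateral switch is strictly profitable, thereby ruling out all four pure profiles: in $(v_1,v_1)$ the customer whose participation payoff $E_i-\tau_i\big(q_i^c(v_2)-\widetilde q_i^c\big)$ beats $-q_i^c(v_1)$ (the $\frac{v_1+v_2}{2}=\widetilde\phi$ normalization together with $0\le\tau_i\le 1$ makes this occur) deviates to $v_2$; in $(v_2,v_2)$ the customer with the larger power deviates to $v_1$ to shed the penalty $\tau_i\big(q_i^c(v_2)-\widetilde q_i^c\big)$, which scales with $p_i$; and in each asymmetric cell the tension between the exchange term $E_i(\cdot)$ and the branch it occupies forces one player to switch. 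Combining ``no pure NE'' with the $2\times 2$ uniqueness fact gives the unique mixed NE under EUT. For PT I would note that the framing map $u_i\mapsto u_i^{\text{PT}}$ in~(\ref{eq:uPT}) is a strictly increasing, continuous bijection vanishing at the (fixed) reference point, so it preserves, for each player, the ordinal ranking of the four pure outcomes; hence the best-response pattern ---\ and in particular the absence of a pure NE ---\ is unchanged. Moreover the framing is applied at the pure-outcome level, so $U_i^{\text{PT}}(\boldsymbol{\sigma})$ in~(\ref{eq:multiplayerPT}) is, like $U_i^{\text{EUT}}$ in~(\ref{eq:multiplayerET}), multilinear in $\boldsymbol{\sigma}$; the same $2\times 2$ argument then gives a unique mixed NE under PT, with, in general, different equilibrium probabilities.

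The main obstacle I anticipate is the cell-by-cell verification that no pure profile is an equilibrium. It requires fixing the sign of $\sum_i q_i^c-\sum_i\widetilde q_i^c$ in the asymmetric cells from $p_1\ne p_2$ (and from the curvature of $g$ over the operating power-factor window, since $g$ is not affine so $A\ne B$ in general), and then checking that the induced payoff comparisons ---\ which involve the sharing term $E_i(\cdot)$, the penalty $\tau_i$, and the branch switches in~(\ref{eq:Bu})--(\ref{eq:Cu}) ---\ forbid a pure equilibrium in every sub-case, possibly under the mild restrictions on $\tau_i$ and on the power ratio that also underlie the two-customer numerical study. Everything else (Nash existence, the crossing-best-responses fact, and the monotone-transform argument for PT) is standard and short.
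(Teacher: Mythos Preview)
Your plan hinges on showing that the $2\times 2$ game has \emph{no} pure Nash equilibrium, and then invoking the crossing-best-responses fact to get a unique completely mixed one. But that is not what the corollary actually asserts, and it is not true in general: as the paper's own proof makes explicit in its case analysis on $\tau$, for small $\tau$ the profile $(v_2,v_2)$ \emph{is} a pure NE, and for large $\tau$ the profile $(v_1,v_2)$ is a pure NE. The phrase ``unique mixed Nash equilibrium'' here means ``unique NE in the space of mixed strategies'' (possibly degenerate, i.e.\ pure), not ``unique properly mixed NE''. So your cell-by-cell deviation argument at $(v_2,v_2)$ --- ``the customer with the larger power deviates to $v_1$ to shed the penalty'' --- fails whenever $\tau$ is small enough, and your hedge about ``mild restrictions on $\tau_i$'' is not a side condition the corollary grants you.

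The paper's route is different. It first uses the hypothesis ``exceed the predefined PF using a pair of actions $(v_1,v_2)$'' literally: it is the assumption $q_1^c(v_1)+q_2^c(v_2)>\widetilde q_1^c+\widetilde q_2^c$, and together with $p_1<p_2$ (WLOG) and the \emph{convexity} of $V(x)=\sqrt{1-x^2}/x$ it forces the reversed pair $(v_2,v_1)$ into the ``otherwise'' branch, i.e.\ $q_1^c(v_2)+q_2^c(v_1)<\widetilde q_1^c+\widetilde q_2^c$. This convexity step is the technical heart of the argument; you gesture at it (``curvature of $g$'') but never use it to pin down the branch. With the branch of every cell fixed, the paper reads off two best-response inequalities ($u_1(v_1,v_1)>u_1(v_2,v_1)$ and $u_2(v_1,v_2)>u_2(v_1,v_1)$), which already kill $(v_2,v_1)$ and $(v_1,v_1)$ as equilibria, and then splits on $\tau$ to decide whether $(v_2,v_2)$, $(v_1,v_2)$, or a proper mixture is the unique NE. Your monotone-transform argument for PT is correct and is exactly what the paper does: since $u_i\mapsto u_i^{\text{PT}}$ in~(\ref{eq:uPT}) is strictly increasing, ordinal rankings and hence best responses are preserved. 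Keep that part; replace the ``no pure NE'' program by the branch determination via convexity plus the $\tau$-case analysis.
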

\begin{proof}
Without loss of generality, we assume that $p_1<p_2$ in the following proof. When the players exceed the predefined PF using a pair of actions $(v_1, v_2)$, the total compensation must satisfy
\begin{equation}
q_1^c(v_1)+q_2^c(v_2)>q_1^c(\widetilde \phi)+q_2^c(\widetilde \phi).
\end{equation}
To derive this equation, we have
\begin{equation}
p_1V(v_1)+p_2V(v_2)<p_1V(\widetilde \phi)+p_2V(\widetilde \phi),
\end{equation}
where $V(x)=\frac{\sqrt{1-x^2}}{x}$. For another action combination $(v_2, v_1)$, we need to compare $q_1^c(v_2)+q_2^c(v_1)$ and $q_1^c(\widetilde \phi)+q_2^c(\widetilde \phi)$. In particular, $q_1^c(v_2)+q_2^c(v_1)-q_1^c(\widetilde \phi)-q_2^c(\widetilde \phi)=(p_1+p_2)V(\widetilde \phi)-p_1V(v_2)-p_2V(v_1)$. Since $V(x)$ is decreasing and convex in $[0,1]$, we have
\begin{equation}
\begin{split}
\frac{p_1V(v_2)}{p_1+p_2}+\frac{p_2V(v_1)}{p_1+p_2}\ge& V\biggl(\frac{p_1}{p_1+p_2}v_2+\frac{p_2}{p_1+p_2}v_1\biggr)\\
=&V\biggl(\widetilde \phi+\frac{p_1-p_2}{2(p_1+p_2)}(v_2-v_1)\biggr)\\
>&V(\widetilde \phi ).
\end{split}
\end{equation}
Thus, $q_1^c(v_2)+q_2^c(v_1)<q_1^c(\widetilde \phi)+q_2^c(\widetilde \phi)$. This inequality implies that, $u_i(v_2, v_1)=-q_i^c(a_i)$ for both customers and thus, they will lose their reactive power using a pair of actions $(v_2, v_1)$.

\vspace{-0.1cm}
\begin{center}
Reactive power compensation: (Customer $1$, Customer $2$)
\begin{tabular} {c||c||c}
\small
 & Player $2$'s $v_1$ & Player $2$'s $v_2$\\ \hline\hline 
Player $1$'s $v_1$ & $u_1(v_1,v_1)$, $u_2(v_1,v_1)$&$u_1(v_1,v_2)$, $u_2(v_1,v_2)$\\
Player $1$'s $v_2$ &$u_1(v_2,v_1)$, $u_2(v_2,v_1)$ & $u_1(v_2,v_2)$, $u_2(v_2,v_2)$\\
\end{tabular}\vspace{+0cm}
\end{center}

The above table is the utility of the proposed noncooperative matrix game. To compare the utility values in the matrix game, we must first define the notion of \emph{best response}:
\begin{definition}
The \emph{best response} $\text{br}(\boldsymbol{a}_{-i})$ of any storage unit $i \in \mathcal{N}$  to the vector of strategies $\boldsymbol{a}_{-i}$ is a set of strategies for seller $i$ such that:
\begin{align*}
\text{br}(\boldsymbol{a}_{-i})\!=\!\{a_i \in \mathcal{A}_i|U_i(a_i,\boldsymbol{a}_{-i}) \ge U_i(a_i^\prime,\boldsymbol{a}_{-i}),\ \forall a_i^\prime \in \mathcal{A}_i\}.
\end{align*}
\end{definition}
Using the concept of best response, for any customer $i \in \mathcal{N}$, when the other customers' strategies are chosen as given by $\boldsymbol{a}_{-i}$, any best response strategy in $\text{br}(\boldsymbol{a}_{-i})$ is at least as good as any other strategy in $\mathcal{A}_i$. Under EUT, since $u_1(v_1,v_1)>u_1(v_2,v_1)$, customer $1$ will pick the action $v_1$ as customer $2$ chooses $v_1$; for customer $2$, since $u_2(v_1,v_2)>u_2(v_1,v_1)$, it will pick the action $v_2$ as customer $1$ chooses $v_1$. Under PT, because the framing utility in (\ref{eq:uPT}) only changes the absolute difference between PT (pure) utility and EUT (pure) utility, a PT customer does not change its picking strategy as its opponent holds. Thus, there exists a unique, mixed NE under both EUT and PT.

In particular, for EUT, as $\tau$ varies, the proposed game can have three cases: \emph{1)} when $\tau$ is small, we can obtain $u_1(v_2,v_2)>u_1(v_1,v_2)$ and $u_2(v_2,v_2)>u_2(v_2,v_1)$, thus, there is a unique, pure NE $(v_2,v_2)$; \emph{2)} when $\tau$ is large, we can obtain $u_1(v_2,v_2)<u_1(v_1,v_2)$, thus, there is a unique, pure NE $(v_1,v_2)$; and \emph{3)} when $\tau$ is a median value, we can obtain $u_1(v_2,v_2)>u_1(v_1,v_2)$ and $u_2(v_2,v_2)<u_2(v_2,v_1)$, thus, there is a unique, proper mixed NE. For PT, we will have the same conclusion due to the framing utility $u_i^{\text{PT}}$ in (\ref{eq:uPT}).

In a practical system, we have the following scenarios: \emph{1)} when $\tau$ is small, the cost/penalty of providing reactive power to the grid is small and, thus, both customers will seek to compensate reactive power. \emph{2)} When $\tau$ is large, the cost/penalty of providing reactive power is large. However, if the total compensation cannot satisfy the Var requirements (both customers choose a small PF strategy), the customers' compensation action will be penalized. Thus, these two customers will then compensate with each other so as to avoid such a cost/penalty. \emph{3)} When $\tau$ is neither too large nor too small, the cost/penalty might be equal to the compensation of choosing the small PF strategy. Thus, customers will have a mixed strategy.
\end{proof}

To complete such compensation between two customers, as per Corollary~\ref{co:pure3},  the grid operator can announce the PFs and based on wireless technologies, two customers will obtain the PF information. Furthermore, to extend the two-by-two interactions to a general case, we can divide the area of interest into several areas where two neighbors can have a peer-to-peer compensation. \textcolor{black}{For example, consider a scenario with $5$ customers in two areas participate in reactive power compensation. In particular, Area $A$  involves Customer $A1$, $A2$ and $A3$ while Area $B$ involves Customer $B1$ and $B2$. In particular, $P_\text{A1}=70, \phi_\text{A1}=0.81$, $P_\text{A2}=30, \phi_\text{A2}=0.87$, $P_\text{A3}=40, \phi_\text{A3}=0.84$ and $P_\text{B1}=43, \phi_\text{B1}=0.86$, $P_\text{B2}=43, \phi_\text{B2}=0.88$. Then, the power factors in both areas can be obtained by integrating the customers' active powers and factors in each area, i.e., $\phi_A= 0.83$ and $\phi_B= 0.87$. Then, these two areas have a pair of power factors (actions). Thus, for the power company, the customers can be first divided into two areas using a pair of power factor (even if the number of customers in each area is different). }

To solve the compensation game and find an NE using a suitable algorithm, under both EUT and PT, a fictitious play-based algorithm is proposed in Table~\ref{tab:algo}. In this algorithm, the first stage involves a simple initialization, in which each customer translates its action, i.e., the reaching PF, into the Var compensating value. Then, we propose an iterative process based on the fictitious play algorithm~\cite{LEARN07} for solving the game in the second learning stage, under both EUT and PT. Here, the customers will observe others strategies at time $m-1$ so as to update their next strategies at time $m$. In this respect, the customers will update their beliefs about each other's strategies by monitoring their actions. We let $a_i(m)$ be the action taken by player $i$ at time $m$ and $\sigma_i^{a_i}(m),\ a_i\in\mathcal{A}_i, i \in \mathcal{N}$, be the empirical frequency, representing the frequency that player $i$ has chosen strategy/action $a_i$ until time $m$. At any given iteration $m$, the following FP process is used by a player $i$ to update its beliefs:
\begin{align}\label{eq:algo}
\sigma_i^{a_i}(m)=\frac{m-1}{m} \cdot \sigma_i^{a_i}(m-1) + \frac{1}{m}\cdot \mathbf{1}_{\{a_i(m-1)=a_i(m)\}}.
\end{align}

\begin{table}[!]
\centering
\caption{Reactive Power Compensation using Proposed FP}
\begin{tabular}{p{8cm}}
 \hline
      \textbf{Stage 1 - Initialization} \vspace*{.1em}\\
   \hspace*{1em} Customer $i$ chooses a certain initial mixed strategy vector $\boldsymbol{\sigma}_i^{\textrm{init}}$.\vspace*{.1em}\\
   \hspace*{1em} Compute the standard and current compensation of customer $i$, i.e., Var value, using $\phi_i, \widetilde \phi_i, p_i, \boldsymbol{a}_i$, (\ref{eq:Var})-(\ref{eq:Cu}).
\vspace*{.2em}\\
\textbf{Stage 2 - Equilibrium Learning for both EUT and PT,}\vspace*{.2em}\\
\hspace*{1em}\textbf{repeat,}\vspace*{.2em}\\
\hspace*{1em}Each player $i \in \mathcal{N}$ observes the actions of its opponent at time \vspace*{.1em}\\
\hspace*{1em}$(m-1)$:\vspace*{.1em}\\
\hspace*{5em}$\boldsymbol{a}_{-i}(m-1)$; \vspace*{.4em}\\
\hspace*{1em}Compute all expected utilities of each pure strategy, (\ref{eq:multiplayerET})-(\ref{eq:multiplayerPT}), (\ref{eq:FPaction}):\vspace*{.4em}\\
\hspace*{5em}$U_i^{\text{EUT}}( \boldsymbol{\sigma}), U_i^{\text{PT}}(\boldsymbol{\sigma}), a_i(m)$;\vspace*{.4em}\\
\hspace*{1em}Each player $i \in \mathcal{N}$ takes/chooses action $a_i(m)$ as per (\ref{eq:FPaction}): \vspace*{.4em}\\
\hspace*{5em}$a_i(m)=\arg\max {u}_{i}\biggl(a_i,\boldsymbol{\sigma}_{-i}(m-1)\biggr)$;\vspace*{.4em}\\
\hspace*{1em}At time $m$, player $i$'s probability/frequency vector will be changed\vspace*{.1em}\\
\hspace*{1em}as per (\ref{eq:algo}), corresponding to player $i$'s pure actions $\boldsymbol{a}_i$: \vspace*{.4em}\\
\hspace*{5em}$\sigma_i^{a_i}(m)=\frac{m-1}{m} \cdot \sigma_i^{a_i}(m-1) + \frac{1}{m}\cdot \mathbf{1}, a_i\in\mathcal{A}_i$; \vspace*{.4em}\\
\hspace*{1em}\textbf{until} \vspace*{.1em}\\
\hspace*{1em}convergence to a stopping criterion for mixed-strategy NE: \vspace*{.1em}\\
\hspace*{5em}$|\boldsymbol{\sigma}_i(m-1)-\boldsymbol{\sigma}_i(m)|<0.0001$.\vspace*{.4em}\\
\textbf{Stage 3 - Var coordination between customers}\vspace*{.1em}\\
   \hspace*{1em}Customers compensate their reactive power based on the power factor. In a local area, customers will exchange reactive power which will allow them to compensate their PFs.\vspace*{.2em}\\
   \hline
    \end{tabular}\label{tab:algo}\\
\vspace*{-0.4cm}
\end{table}

The strategy chosen at time $m$ is the one that maximizes the expected utility with respect to the updated empirical frequencies. This expected utility would follow (\ref{eq:multiplayerET}) for EUT and (\ref{eq:multiplayerPT}) for PT. Thus, player $i$ can repeatedly choose its strategy as:

\begin{align}\vspace{+0.05cm}
\label{eq:FPaction}
a_i(m)=\arg\max_{a_i\in\mathcal{A}_i}{u}_{i}\biggl(a_i,\boldsymbol{\sigma}_{-i}(m-1)\biggr),
\end{align}\vspace{+0.05cm}

\noindent where the utility here is the expected value obtained by player $i$ with respect to the mixed strategy of its opponents, when player $i$ chooses pure strategy $a_i$. If the chosen strategy $a_i(m)$ is not a singleton, there exists at least one strategy, in which the utility of the strategy is the maximum value in a certain iteration. In particular, if there are more than one strategy that maximizes the utility in (\ref{eq:FPaction}), we will pick the smaller pure strategy, which makes economic sense.

For some specific games, it is well known that FP is guaranteed to converge to a mixed strategy NE~\cite{LEARN07}, as the choosing frequency of players' beliefs converge to a fixed point. However, to our knowledge, such a result has not been extended to PT, as done in the following theorem:

\begin{theorem}\label{th:cov2}
For the proposed reactive power compensation game, the proposed FP-based algorithm is guaranteed to converge to a mixed NE under both EUT and PT, if the choosing frequency of players' beliefs converges in the FP iterative process.
\end{theorem}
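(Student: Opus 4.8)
The plan is to reduce the PT case to the classical fictitious-play argument by observing that, once the subjective transformation (\ref{eq:uPT}) is applied, the PT game is again an ordinary finite strategic-form game. First I would introduce the auxiliary game $\Xi^{\text{PT}}=[\mathcal{N},\{\mathcal{A}_i\}_{i\in\mathcal{N}},\{u_i^{\text{PT}}\}_{i\in\mathcal{N}}]$ whose pure-strategy payoffs are the fixed real numbers $u_i^{\text{PT}}(\boldsymbol a)$ defined in (\ref{eq:uPT}); then, when the expected utility in (\ref{eq:FPaction}) is computed through (\ref{eq:multiplayerPT}), the update rules (\ref{eq:algo})--(\ref{eq:FPaction}) are precisely standard fictitious play played on $\Xi^{\text{PT}}$. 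The structural fact I would record here is that, for a fixed $a_i$, the map $\boldsymbol\sigma_{-i}\mapsto U_i^{\text{PT}}(a_i,\boldsymbol\sigma_{-i})=\sum_{\boldsymbol a_{-i}}\big(\prod_{j\neq i}\sigma_j(a_j)\big)\,u_i^{\text{PT}}(a_i,\boldsymbol a_{-i})$ is multilinear, hence continuous, exactly as $U_i^{\text{EUT}}$ in (\ref{eq:multiplayerET}); consequently the same reasoning with $\Xi$ in place of $\Xi^{\text{PT}}$ covers EUT, so it suffices to argue for one generic finite game.

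Next I would invoke the hypothesis: suppose the empirical-frequency vectors produced by (\ref{eq:algo}) converge, $\boldsymbol\sigma_i(m)\to\boldsymbol\sigma_i^*$ for every $i\in\mathcal{N}$, and set $\boldsymbol\sigma^*=(\boldsymbol\sigma_1^*,\dots,\boldsymbol\sigma_N^*)$. I would use that $\sigma_i^{a_i}(m)$ is a Ces\`aro average $\frac1m\sum_{t=1}^m\mathbf{1}_{\{a_i(t)=a_i\}}$ of the actually chosen actions, so that $\sigma_i^{*,a_i}>0$ forces $a_i$ to be chosen infinitely often along the run. The heart of the proof is then the claim that every pure action in $\mathrm{supp}(\boldsymbol\sigma_i^*)$ is a best response against $\boldsymbol\sigma_{-i}^*$. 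I would argue by contradiction: if some $a_i$ had $\sigma_i^{*,a_i}>0$ yet $U_i(a_i,\boldsymbol\sigma_{-i}^*)<\max_{a_i'\in\mathcal{A}_i}U_i(a_i',\boldsymbol\sigma_{-i}^*)-\varepsilon$ for some $\varepsilon>0$, then by continuity of $U_i(\cdot,\boldsymbol\sigma_{-i})$ in $\boldsymbol\sigma_{-i}$ together with $\boldsymbol\sigma_{-i}(m-1)\to\boldsymbol\sigma_{-i}^*$, this strict gap would persist for all large $m$; hence $a_i$ would never attain the maximum in (\ref{eq:FPaction}) --- whatever the tie-breaking rule that selects the smallest maximizer does --- so $\mathbf{1}_{\{a_i(t)=a_i\}}=0$ eventually and $\sigma_i^{a_i}(m)\to0$, contradicting $\sigma_i^{*,a_i}>0$.

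Finally I would close the argument: since $\mathrm{supp}(\boldsymbol\sigma_i^*)$ is contained in the set of pure best responses to $\boldsymbol\sigma_{-i}^*$, one gets $U_i(\boldsymbol\sigma_i^*,\boldsymbol\sigma_{-i}^*)=\sum_{a_i}\sigma_i^{*,a_i}U_i(a_i,\boldsymbol\sigma_{-i}^*)=\max_{a_i\in\mathcal{A}_i}U_i(a_i,\boldsymbol\sigma_{-i}^*)\ge U_i(\boldsymbol\sigma_i,\boldsymbol\sigma_{-i}^*)$ for every $\boldsymbol\sigma_i\in\Gamma_i$, which is exactly the equilibrium inequality (\ref{eq:ne}) for player $i$; applying this to every $i\in\mathcal{N}$ shows that the limit $\boldsymbol\sigma^*$ is a mixed-strategy NE, and the statement is non-vacuous because Lemma~\ref{th:exiN} guarantees such an equilibrium exists under both EUT and PT. The step I expect to require the most care is the persistence-of-the-gap argument, where one must combine the continuity (multilinearity) of the EUT/PT expected utility in the opponents' empirical frequencies with the averaging structure of (\ref{eq:algo}); the remaining pieces --- identifying the PT game as a finite game and reading off the equilibrium condition from the support --- are essentially bookkeeping.
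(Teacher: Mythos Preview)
Your proposal is correct and follows essentially the same contradiction argument as the paper: assume the empirical frequencies converge to $\boldsymbol\sigma^*$, suppose some $a_i$ with $\sigma_i^{*,a_i}>0$ is not a best response to $\boldsymbol\sigma_{-i}^*$, use continuity of the expected payoff in the opponents' mixed strategies to make the strict gap persist for all large $m$, and conclude $a_i$ is never selected thereafter so its frequency tends to zero, a contradiction. Your explicit framing of $\Xi^{\text{PT}}$ as an ordinary finite game (so that EUT and PT are handled by one argument) is a clean conceptual shortcut that the paper leaves implicit, but the underlying mechanism is the same.
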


\begin{proof}

The convergence of FP to a mixed strategy NE for EUT under the convergence of choosing frequency is a known result as discussed in~\cite{GT00} and~\cite{LEARN07}. For PT, if the choosing frequency converges to a fixed point, this point will be a mixed strategy NE. We prove this case using contradiction as follows.

Suppose that $\{\boldsymbol{\sigma}_k\}$ is a fictitious play process that will converge to a fixed point, i.e., a mixed strategy $\boldsymbol{\sigma}^*$, after $m=n_0$ iterations. By contradiction, we start to assume that the point $\boldsymbol{\sigma}^*=\{\boldsymbol{\sigma}^*_i,\boldsymbol{\sigma}^*_{-i}\}$ is not a mixed strategy NE. Then, \emph{1)} there must exist a strategy $\sigma'_i(a'_i) \in \boldsymbol{\sigma}^*_i$, such that $\sigma_i(a_i)>0, \sigma_i(a_i) \in \boldsymbol{\sigma}^*$ (at least one mixed strategy of player $i$ is not zero) and

\begin{align}\vspace{+0.05cm}
u_i^{\text{PT}}\biggl(a'_i,\boldsymbol{\sigma}^*_{-i}\biggr)>u_i^{\text{PT}}\biggl(a_i,\boldsymbol{\sigma}^*_{-i}\biggr),
\end{align}\vspace{+0.05cm}

\noindent where $u_i(a_i,\boldsymbol{\sigma}^*_{-i})$ is the expected utility with respect to the mixed strategies of the opponents of player $i$, when player $i$ chooses pure strategy $a_i$. Here, we can choose a value $\epsilon$ that satisfies \emph{2)} $0<\epsilon<\frac{1}{2}|u_i^{\text{PT}}(a'_i,\boldsymbol{\sigma}^*_{-i})-u_i^{\text{PT}}(a_i,\boldsymbol{\sigma}^*_{-i})|$ as $\boldsymbol{\sigma}$ converges to $\boldsymbol{\sigma}^*$ at iteration $m=n_0$. Also, \emph{3)} since the FP process decreases as the number of iterations $n$ increases, the utility distance of a pure strategy between two consecutive iterations must be less than $\epsilon$ after a certain iteration $n_0$. For $n \ge n_0$, the FP process can be written as:
\begin{equation}\label{eq:ineq}
\begin{split}
u_i^{\text{PT}}(a_i,\boldsymbol{\sigma}_{-i}^n)=&\sum_{\boldsymbol{a} \in \mathcal{A}} u_i^{\text{PT}}(a_i,\boldsymbol{a}_{-i}^n) \boldsymbol{\sigma}_{-i}^n\\
\le &\sum_{\boldsymbol{a} \in \mathcal{A}} u_i^{\text{PT}}(a_i,\boldsymbol{a}_{-i}^*) \boldsymbol{\sigma}^*_{-i}+\epsilon\\
< &\sum_{\boldsymbol{a} \in \mathcal{A}} u_i^{\text{PT}}(a'_i,\boldsymbol{a}_{-i}^*) \boldsymbol{\sigma}^*_{-i}-\epsilon\\
\le &\sum_{\boldsymbol{a} \in \mathcal{A}} u_i^{\text{PT}}(a'_i,\boldsymbol{a}_{-i}^n) \boldsymbol{\sigma}_{-i}^n\\
=&u_i^{\text{PT}}(a'_i,\boldsymbol{\sigma}_{-i}^n).
\end{split}
\end{equation}

In (\ref{eq:ineq}), we compute the expected utility of pure strategy $a_i$ over the probabilities of all possible cases with respect to the utilities. We obtained the first inequality between two consecutive iterations as in \emph{3)}. We obtained the second inequality using \emph{1)} and \emph{2)}. Then, we obtained the third inequality like the first one, due to \emph{3)}. At last we obtained the expected utility of pure strategy $a'_i$.

Thus, player $i$ would not choose $a_i$ but would rather choose $a'_i$ after the $n$th iteration, mathematically, we will have $\sigma_i(a_i)=0$. Hence, we get $\sigma_i(a_i)=0$ which contradicts the initial assumption that $\sigma_i(a_i)>0$; thus the theorem is shown.
\end{proof}\vspace{+0.1cm}

Following the convergence to a mixed-strategy Nash equilibrium, the last stage in the algorithm of Table~\ref{tab:algo} is how the customers compensate their reactive power in practice and exchange Var between customers. The actual process of Stage 3 is beyond the scope of this paper and will follow economic and real-world contract negotiations.

The algorithm in Table~\ref{tab:algo} shows how customers act in concert with each other for the purpose of reactive power compensation. Such process requires the power company to investigate customers' perception on compensation as captured by the rationality parameters $\alpha$, $\beta$ and $k$ in (\ref{eq:framvalue}). Furthermore, the power company wants to study the relationship between EUT and PT so as to draft a contract with customers. Thus, we next find when the EUT utility is exactly equal to PT result, i.e., the intersection point between EUT and PT.

\begin{theorem}\label{th:kEUTPT}\vspace{+0.05cm}
For the proposed reactive power compensation game, for every customer $i$, there exists a threshold $k_0$, such that, when $k_i<k_0$,
\textcolor{black}{$U_i^{\text{PT}}(\boldsymbol{\sigma}^{\text{PT}*})>U_i^{\text{EUT}}(\boldsymbol{\sigma}^{\text{EUT}*})$}, and when $k_i>k_0$,
\textcolor{black}{$U_i^{\text{PT}}(\boldsymbol{\sigma}^{\text{PT}*})<U_i^{\text{EUT}}(\boldsymbol{\sigma}^{\text{EUT}*})$}.
\end{theorem}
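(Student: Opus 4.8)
The plan is to fix every parameter except $k_i$ and study $g(k_i):=U_i^{\text{PT}}(\boldsymbol{\sigma}^{\text{PT}*})$ as a scalar function of $k_i$: I would show it is continuous and strictly decreasing, then apply the intermediate value theorem against the constant $U_i^{\text{EUT}}(\boldsymbol{\sigma}^{\text{EUT}*})$, which does not involve $k_i$.

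First I would isolate the dependence on $k_i$ in the stage payoff. Writing $u_i^0:=u_i(\boldsymbol{a}^0)$ (a constant) and partitioning the pure profiles $\boldsymbol{a}\in\mathcal{A}$ into \emph{gain} profiles, those with $u_i(\boldsymbol{a})\ge u_i^0$, and \emph{loss} profiles, (\ref{eq:uPT}) and (\ref{eq:multiplayerPT}) give, for any fixed $\boldsymbol{\sigma}$,
\begin{equation*}
U_i^{\text{PT}}(\boldsymbol{\sigma})=P_i(\boldsymbol{\sigma})-k_i\,Q_i(\boldsymbol{\sigma}),
\end{equation*}
where $P_i(\boldsymbol{\sigma})\ge 0$ collects the gain terms $\big(\prod_j\sigma_j(a_j)\big)\big(u_i(\boldsymbol{a})-u_i^0\big)^{\alpha_i}$ and $Q_i(\boldsymbol{\sigma})\ge 0$ collects the loss terms $\big(\prod_j\sigma_j(a_j)\big)\big(u_i^0-u_i(\boldsymbol{a})\big)^{\beta_i}$. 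Thus $k_i$ enters only the last block, with a minus sign, so for each fixed $\boldsymbol{\sigma}$ the map $k_i\mapsto U_i^{\text{PT}}(\boldsymbol{\sigma})$ is affine and non-increasing --- strictly decreasing as soon as $\boldsymbol{\sigma}$ gives positive weight to some loss profile.

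Next I would transfer continuity and monotonicity to the equilibrium value. For continuity, the payoffs $u_i^{\text{PT}}(\cdot;k_i)$ vary continuously with $k_i$, so the standard upper hemicontinuity of the Nash correspondence, together with Lemma~\ref{th:exiN} for existence and Theorem~\ref{th:cov2} for a constructive FP selection to track, yields a continuous branch $k_i\mapsto\boldsymbol{\sigma}^{\text{PT}*}(k_i)$ and hence a continuous $g$. For monotonicity, take $k_i<k_i'$ and $\boldsymbol{\sigma}'=\boldsymbol{\sigma}^{\text{PT}*}(k_i')$: then $g(k_i')=P_i(\boldsymbol{\sigma}')-k_i'Q_i(\boldsymbol{\sigma}')\le P_i(\boldsymbol{\sigma}')-k_iQ_i(\boldsymbol{\sigma}')$, and since at a PT equilibrium player $i$ best-responds in $\sigma_i$, the effect of $k_i$ through player $i$'s own strategy adjustment is negligible to first order by the envelope theorem, so $g'(k_i)=-Q_i(\boldsymbol{\sigma}^{\text{PT}*}(k_i))\le 0$, i.e. $g$ is decreasing (strictly, as long as the PT equilibrium keeps a loss profile in its support).

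Finally I would evaluate the two ends. As $k_i\to 0^{+}$ the loss block disappears, so $g(k_i)\to P_i(\boldsymbol{\sigma}^{\text{PT}*}(0))\ge 0$; because a PT customer that effectively discards losses perceives no less than it actually reaps at the EUT equilibrium, this limit exceeds $U_i^{\text{EUT}}(\boldsymbol{\sigma}^{\text{EUT}*})$. As $k_i\to\infty$ the amplified loss block drives $g(k_i)\to-\infty$, the point being that the PT equilibrium cannot concentrate all its mass on loss-free profiles --- otherwise player $i$ would have a loss-free dominant response, contradicting the genuinely mixed character of the equilibrium in the regime of interest. Continuity plus strict monotonicity then produce a unique crossing $k_0$ with $g(k_0)=U_i^{\text{EUT}}(\boldsymbol{\sigma}^{\text{EUT}*})$, whence $U_i^{\text{PT}}(\boldsymbol{\sigma}^{\text{PT}*})\gtrless U_i^{\text{EUT}}(\boldsymbol{\sigma}^{\text{EUT}*})$ according as $k_i\lessgtr k_0$. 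The main obstacle is exactly the equilibrium coupling hidden in the envelope step: moving $k_i$ re-optimizes the opponents as well, and controlling their contribution to $g'(k_i)$ rigorously --- rather than through the first-order heuristic above --- is the delicate point; a safe fallback is to check the monotonicity directly on the finite best-response diagram of the game, in the spirit of the two-customer analysis of Corollary~\ref{co:pure3}.
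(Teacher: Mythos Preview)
Your approach coincides with the paper's: split $U_i^{\text{PT}}$ into a gain block and a $k_i$-weighted loss block, observe that the EUT value is independent of $k_i$ while the PT value is strictly decreasing in $k_i$, and read off a unique crossing $k_0$. The paper carries this out by directly computing
\[
\frac{\partial U_i^{\text{PT}}(\boldsymbol{\sigma}^*,k_i)}{\partial k_i}
= -\sum_{\boldsymbol{a}\in\mathcal{A},\,u_i<u_i^0}\Big(\prod_{j}\sigma_j(a_j)\Big)\big(u_i^0-u_i(\boldsymbol{a})\big)^{\beta_i}<0,
\]
with $\boldsymbol{\sigma}^*$ held fixed, and then concludes from monotonicity that the PT and EUT curves meet at one $k_0$.

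The one substantive difference is that you explicitly confront what the paper suppresses: the equilibrium $\boldsymbol{\sigma}^{\text{PT}*}$ itself moves with $k_i$, so the derivative above is only the \emph{direct} effect. The paper's argument is really your ``fixed $\boldsymbol{\sigma}$'' observation applied at the equilibrium, without the envelope or continuity layer you add; it does not track the indirect effect through the opponents' re-optimization, nor does it argue the endpoint behavior you sketch at $k_i\to 0^+$ and $k_i\to\infty$. In that sense your write-up is more careful than the paper's own proof, and the ``delicate point'' you flag is precisely the step the paper takes for granted. Your endpoint claims (especially that the $k_i\to 0^+$ limit dominates the EUT value, and that the PT equilibrium retains some loss mass) are the parts that would still need a concrete check in the specific game; the paper sidesteps them entirely by asserting the intersection directly from the sign of the partial derivative.
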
\vspace{+0.05cm}

\begin{proof}

In the proposed game, the utility derivative on $k$ can be obtained by (\ref{eq:multiplayerET}) and (\ref{eq:multiplayerPT}). The partial derivative of $U_i$ with respect to $k_i$ depends on the expected utility, while the partial derivative of $u_i$ depends on the utility of a pure strategy. In (\ref{eq:uPT}), the pure PT utility is divided as two cases by the reference point; in (\ref{eq:multiplayerPT}), the expected PT utility can be also viewed as a summation of such two cases. \textcolor{black}{Using the PT utility $u_i^{\text{PT}}$ of a certain pure strategy set in (\ref{eq:uPT}), we can obtain the PT expected utility, as the NE configuration defined in (\ref{eq:multiplayerPT}), }

\vspace{-0.2cm}
\begin{equation*}
\textcolor{black}{\begin{split}
U_i^{\text{PT}}(\boldsymbol{\sigma})=&\sum_{\boldsymbol{a} \in \mathcal{A}}\bigg(\prod_{j=1}^N \sigma_j(a_j)\bigg) u_i^{\text{PT}}(a_i, \boldsymbol{a}_{-i}),\\
=&\sum_{\boldsymbol{a} \in \mathcal{A}, u_i>u_i^0}\bigg(\prod_{j=1}^N \sigma_j(a_j)\bigg) u_i^{\text{PT}}(a_i, \boldsymbol{a}_{-i})\\
&+\sum_{\boldsymbol{a} \in \mathcal{A}, u_i=u_i^0}\bigg(\prod_{j=1}^N \sigma_j(a_j)\bigg) u_i^{\text{PT}}(a_i, \boldsymbol{a}_{-i})\\
&+\sum_{\boldsymbol{a} \in \mathcal{A}, u_i<u_i^0}\bigg(\prod_{j=1}^N \sigma_j(a_j)\bigg) u_i^{\text{PT}}(a_i, \boldsymbol{a}_{-i})\\
=&U_i^{\text{PT}}(\boldsymbol{\sigma})\cdot 1_{u_i>u_i^0}+ U_i^{\text{PT}}(\boldsymbol{\sigma})\cdot 1_{u_i<u_i^0}.
\end{split}}
\end{equation*}

\textcolor{black}{Here, we can get $\sum_{\boldsymbol{a} \in \mathcal{A}, u_i=u_i^0}\bigg(\prod_{j=1}^N \sigma_j(a_j)\bigg) u_i^{\text{PT}}(a_i, \boldsymbol{a}_{-i})=0$. In particular, as $u_i<u_i^0$, $u_i^{\text{PT}}$ can be differentiated by $k_i$ as per (\ref{eq:uPT}). We note that, $U_i^{\text{PT}}$ is a function at $\boldsymbol{\sigma}$, $u_i^0$, $\alpha_i$, $\beta_i$ and $k_i$. In this respect,  $U_i^{\text{PT}}(\boldsymbol{\sigma})=U_i^{\text{PT}}(\boldsymbol{\sigma}, k_i)$}. Thus, to obtain the partial derivative of $U_i$ in $k_i$, we need to consider the partial derivative of both cases in $k_i$:

\begin{equation}\label{eq:k}
\begin{split}
\frac{\partial U_i^{\text{EUT}}\textcolor{black}{(\boldsymbol{\sigma}^*)}}{\partial k_i} =&0,\\
\frac{\partial U_i^{\text{PT}}\textcolor{black}{(\boldsymbol{\sigma}^*, k_i)}}{\partial k_i} =&\frac{\partial U_i^{\text{PT}}\textcolor{black}{(\boldsymbol{\sigma}^*, k_i)}\cdot 1_{u_i>u_i^0}}{\partial k_i}+\frac{\partial U_i^{\text{PT}}\textcolor{black}{(\boldsymbol{\sigma}^*, k_i)}\cdot 1_{u_i<u_i^0}}{\partial k_i}\\
=&0-\sum_{\boldsymbol{a} \in \mathcal{A}, u_i<u_i^0}\bigg(\prod_{j=1}^N \sigma_j(a_j)\bigg)\bigg(u_i^0(\boldsymbol{a}^0)-u_i(\boldsymbol{a})\bigg)^{\beta_i}\\
<&0.
\end{split}
\end{equation}

Here, we note that \emph{1)} the partial derivative of $U_i$ is the expected utility while $u_i$ is the utility of a pure strategy; and \emph{2)} the utility of customer $i$ is a continuous function in $k_i$ while having the discrete action $a_i$. At a mixed NE $\boldsymbol{\sigma}^{\text{EUT}*}$, the objective utility will be a constant value. For PT cases, we can obtain the expected utility via $\boldsymbol{\sigma}^{\text{PT}*}$ and $U_i^{\text{PT}}$ is a strictly decreasing function as $k_i$ increases. Then, $U_i^{\text{PT}}(\boldsymbol{\sigma}^{\text{PT}*})$ and $U_i^{\text{EUT}}(\boldsymbol{\sigma}^{\text{EUT}*})$ will intersect at a point when $k_i=k_0$. In particular, we can compute $k_0$ at the intersected point using the parameters (i.e., $\alpha, \beta, U_i^{\text{EUT}}, u_i^0$). Hence, when $k_i<k_0$, $U_i^{\text{PT}}(\boldsymbol{\sigma}^{\text{PT}*})>U_i^{\text{EUT}}(\boldsymbol{\sigma}^{\text{EUT}*})$, and when $k_i>k_0$, $U_i^{\text{PT}}(\boldsymbol{\sigma}^{\text{PT}*})<U_i^{\text{EUT}}(\boldsymbol{\sigma}^{\text{EUT}*})$. This conclusion implies that, a small (large) $k$ will increase the gain (loss) evaluation, and then increase (decrease) the expected value under PT.
\end{proof}

The PT framing effect is captured via three key parameters: we have three factors, $\alpha_i, \beta_i$ and $k_i$. Compared to $\alpha_i$ and $\beta_i$, the partial derivative of $U_i$ with respect to $k_i$ is more linear. Thus, it is more practical for the power company to control local reactive power compensation via $k_i$ instead of $\alpha_i$ and $\beta_i$. Thus, compared to other factors, the linear property of the aversion parameter $k$ provides a useful approach for the power company to distinguish customers' perception within the proposed compensation game. Theorem~\ref{th:kEUTPT} analyzes the impact of the aversion parameter $k$ instead of the weighting factors $\alpha, \beta$. This theorem investigates the intersection between EUT and PT, and thus, it can be used to compute when the customers' utility is more/less than the standard compensation. In particular, since the derivative of PT utility on $k$ is monotonic in (\ref{eq:k}), customers can have a linear outcome regarding to their perception on compensation gains as opposed to that on compensation losses.

\section{Simulation Results and Analysis}\label{sec:sim}

In this section, we run extensive simulations for understanding how customers' behaviors impact the Var compensation coordination under both EUT and PT game. For simulating the proposed system, we consider a local area consisting of a number of customers equipped with electrical devices to compensate reactive power, i.e., switched capacitors, in which customers' compensation coordination depends on their reaching PF in (\ref{eq:sharing}). To obtain the mixed Nash equilibrium under both EUT and PT, we use the proposed algorithm in Table~\ref{tab:algo}.

\vspace{-0.1cm}
\subsection{Two-customer Case}\label{sec:sim1}

First, we start with the case of two customers. Here, we assume that Customer $1$ and Customer $2$'s initial PF are, $\phi_1=0.77, \phi_2=0.79$, respectively, and their standard PFs are $\widetilde \phi_1=\widetilde \phi_2=0.85$. Also, we assume that the active power $p_1=2$~kW, $p_2=3$~kW with a relative penalty factor $\tau_i=0.7, \forall i$. In particular, both customers choose their strategy from a two-strategy set $\mathcal{A}_i=\{0.8, 0.9\}, \forall i$ as the compensating PF, and their initial mixed strategy sets are $\boldsymbol{\sigma}_1^{\text{init}}=[0.67\ 0.33]^T$ and $\boldsymbol{\sigma}_2^{\text{init}}=[0.2\ 0.8]^T$. In general, if we neglect the impact of the power factor, a bigger active power requirement will lead to a bigger Var payment. In the subsequent simulations, we vary customers' parameters, i.e., $\alpha_i$, $\beta_i$, $k_i$ and $u_i^0$, to gain insights on the proposed Var compensation game under both EUT and PT considerations.

\begin{figure}[!t]
 \begin{center}
 \vspace{-0.3cm}
  \includegraphics[width=8cm]{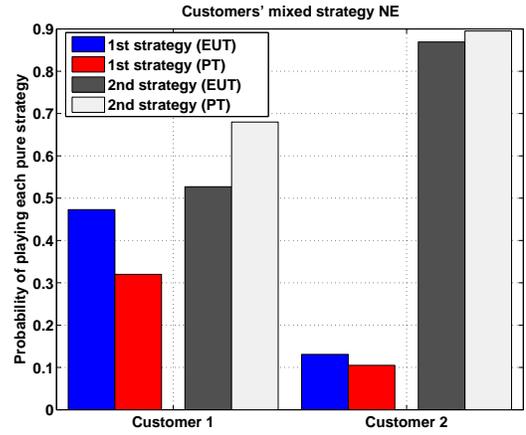}
 \vspace{-0.5cm}
   \caption{\label{fig:bar} Customers' mixed-strategies at the equilibrium for both EUT and PT with $\alpha_i=0.7, \beta_i=0.6, k_i=2, \forall i$.}
\end{center}\vspace{-0.3cm}
\end{figure}

\begin{figure}[!t]
 \begin{center}
 \vspace{-0.3cm}
  \includegraphics[width=8cm]{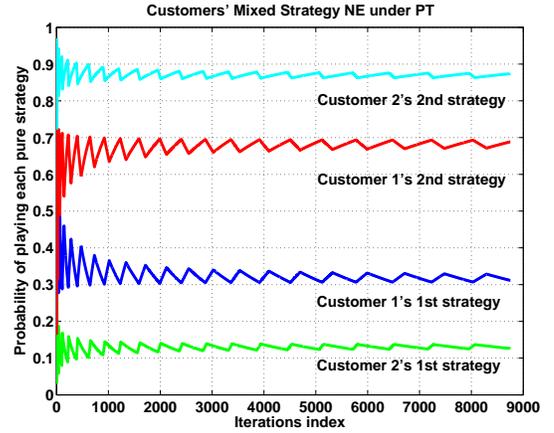}
\vspace{-0.5cm}
   \caption{\label{fig:figiterALLPT} The mixed strategy of playing each pure strategy for both customers as the number of iterations increases. }
\end{center}\vspace{-0.9cm}
\end{figure}

Fig.~\ref{fig:bar} shows the resulting mixed strategies at both the EUT and PT equilibria reached via fictitious play. In this figure, we choose $\alpha_i=0.7$, $\beta_i=0.6$, $k_i=2, \forall i$, and the reference point $u_i^0(\boldsymbol{a}^0)=u_i^0(\widetilde \phi_1, \widetilde \phi_2)$ for two PT customers. In particular, the reference point is chosen to coincide with the case in which customers compensate their reactive power with respect to a standard PF $\widetilde \phi_1=\widetilde \phi_2=0.85$ that is conveyed to customers by the power company, such that, $u_1^0= -0.0256, u_2^0=0.0256$ in (\ref{eq:uPT}). From Fig.~\ref{fig:bar}, we can first see that the mixed strategies of both customers are different between PT and EUT. Under PT, both customers are more likely to choose a high PF compensation action, i.e., $a_1=a_2=0.9$. When a customer chooses a low (high) PF strategy, its reactive power compensation goes below (exceeds) the standard PF compensation of the grid ($\widetilde \phi=0.85$). Thus, under PT, customers evaluate their payoff based on the observation of the standard PF compensation (i.e., $u_i^0(\boldsymbol{a}^0)$), and this will make them avoid taking a low PF action. Indeed, since $\alpha_i>\beta_i$ and $k_i>1,\forall i$, PT gains (losses) will decrease (increase) in (\ref{eq:uPT}) and then, both customers tend to provide extra Var to the grid, instead of risking prospective losses if they use a low PF value. Also, from Fig.~\ref{fig:bar}, we can see the difference in Customer $1$'s strategies chosen in EUT versus PT is larger than that of Customer $2$. In this case, Customer $2$'s active power is larger than Customer $1$. Then, the framing effect on Customer $2$ is smaller than Customer $1$, due to the concavity of gains (i.e., using the large strategy) and the convexity of losses (i.e, using the small strategy) in (\ref{eq:uPT}). Thus, Customer $2$'s EUT strategy would lightly increase via PT considerations.

In Fig.~\ref{fig:figiterALLPT}, we show the values of the PT mixed strategies of both customers (corresponding to Fig.~\ref{fig:bar}), as the number of iterations increases. Here, the proposed algorithm in (\ref{eq:algo}) clearly converges to a mixed NE. The mixed strategy increases to its maximum quickly during the first iteration in (\ref{eq:algo}) and then it decreases as the number of iterations increases. The convergence criterion here is that the difference between two consecutive iterations is small enough in the proposed two-customer game. From Fig.~\ref{fig:figiterALLPT}, the difference between the last two iterations is less than $10^{-4}$ in 1 sec (real time, by using a machine with a 2.2GHz processor and $3$GB RAM). In practice, the company can set a suitable stop criterion for balancing the required communication delay and the convergence time of reactive power compensation.

\begin{figure}[!t]
 \begin{center}
 \vspace{-0.3cm}
  \includegraphics[width=8cm]{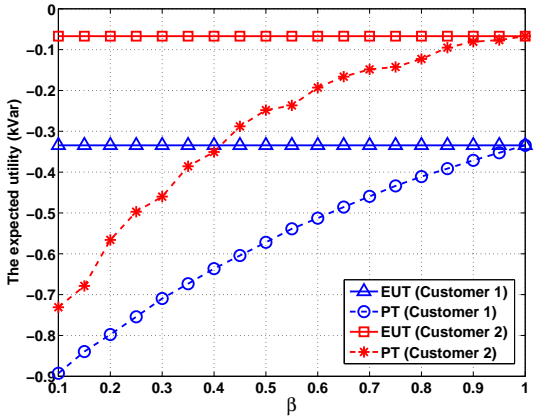}
 \vspace{-0.5cm}
   \caption{\label{fig:beta} The Var expected utility under both PT and EUT as the loss distortion parameter $\beta$ varies.}
\end{center}\vspace{-0.9cm}
\end{figure}

In Fig.~\ref{fig:beta}, we study the costs of compensating reactive power in kVar as the loss distortion parameter $\beta_1=\beta_2=\beta$ increases. In order to singly observe the impact of the loss distortion, we hold $\alpha_1=\alpha_2=1, k_1=k_2=1$ to cancel the gain distortion and aversion effect in (\ref{eq:uPT}). Also, we assume $u_i^0=0, i=\{1, 2\}$ to neglect the impact of reference point. Here the expected utility for both customers is the costs/payment of Var compensation, which is a negative value in (\ref{eq:utility}). In this figure, we can see that the expected utility increases as $\beta$ increases, implying that large $\beta$ decreases PT costs in Var compensation. A small $\beta$ increases the PT loss and, for the proposed prospect model, PT customers will have much cost if they increasingly evaluate PT losing distortion in (\ref{eq:utility}). In particular, when $\beta=1$, the PT utility is equal to the EUT utility. From this figure, we can also see that a same losing parameter $\beta$ leads to different impacts on customers. For example, Customer $2$'s difference between PT and EUT is greater than $0.6$ while that of Customer $1$ is less than $0.6$, when $\beta=0.1$. This is because Customer $2$ requires more active power than Customer $1$.

\begin{figure}[!t]
 \begin{center}
 \vspace{-0.3cm}
  \includegraphics[width=8cm]{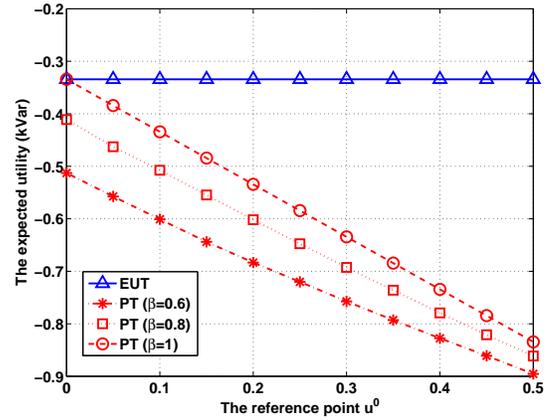}
 \vspace{-0.5cm}
   \caption{\label{fig:refu} The Var expected utility under PT and EUT as the reference point $\boldsymbol{u}^0$ varies.}
\end{center}\vspace{-0.9cm}
\end{figure}

Fig.~\ref{fig:refu} shows the expected Var value under both EUT and PT as the reference point $\boldsymbol{u}^0$ varies. For this scenario, we maintain $\alpha_1=\alpha_2=1, k_1=k_2=1$ to eliminate the impact of gain distortion. Also, we assume both customers have an equal reference and a same loss distortion, i.e., $u_1^0=u_2^0$ and $\beta_1=\beta_2$. First, we can see that the expected Var cost will increase as the reference increases. Because the reference point in (\ref{eq:uPT}) is subtracted from the EUT utility, customer evaluation would depend on a referent level. In essence, a large (active) power requirement leads to more payments in practice, compared to the EUT case. Also, Fig.~\ref{fig:refu} shows that the distortion parameter $\beta$ will have different impacts on the PT utility as well as the references $\boldsymbol{u}^0$. For example, when the reference is a small value, i.e., $u^0=0$, the cost difference between $\beta=1$ and $\beta=0.6$ is around $0.2$, while the difference is less than $0.1$ as $u^0=0.5$. This shows how the distortion parameter impacts the PT utility, which incorporates both the reference and EUT utility in (\ref{eq:uPT}).

\vspace{-0cm}

\subsection{System with More than Three Customers}\label{sec:sim2}\vspace{-0cm}

In Fig.~\ref{fig:player3bar}, we show all mixed strategies in a three-customer game. We choose $\alpha=0.7$, $\beta=0.6$, $k=2, \forall i$ and $\mathcal{A}_i=\{0.8, 0.82, 0.84, 0.86, 0.88, 0.9\}$ for all customers, in which the PT reference point is $u_i^0(\boldsymbol{a}^0)=u_i^0(\widetilde \phi_1, \widetilde \phi_2, \widetilde \phi_3)$ and standard PF is $0.85$. In particular, the active power requirement vector and the initial PF are randomly chosen, respectively, as $\boldsymbol{p}=[2.4\ 4.1\ 3]^T$ and $\boldsymbol{\phi}=[0.77\ 0.78\ 0.77]^T$. Here, we can see that the PT mixed strategies are different from EUT results. Accounting for the PT framing effect in (\ref{eq:uPT}), the reference point $\boldsymbol{u}^0=[-0.1241\ 0.1229\ 0.0012]^T$ and the distortion parameters allow customers to evaluate more on the losses ($\beta, k$) than the gains ($\alpha$). Thus, all customers would want to increase their high PF strategy due to the fact that they observe a prospective losing tendency in practice. Moreover, compared to Fig.~\ref{fig:bar}, we can see that the framing effect in (\ref{eq:uPT}) can change a pure strategy under PT to a more mixed strategy, even change the pure strategy to another pure strategy.

\begin{figure}[!t]
 \begin{center}
 \vspace{-0.3cm}
  \includegraphics[width=8cm]{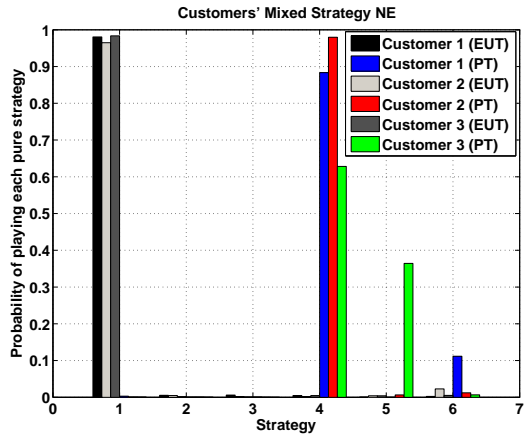}
 \vspace{-0.5cm}
   \caption{\label{fig:player3bar} Customers' mixed-strategies at the equilibrium for both EUT and PT in a multi-customer game.}
\end{center}\vspace{-0.3cm}
\end{figure}

\begin{figure}[!t]
 \begin{center}
 \vspace{-0.3cm}
  \includegraphics[width=8cm]{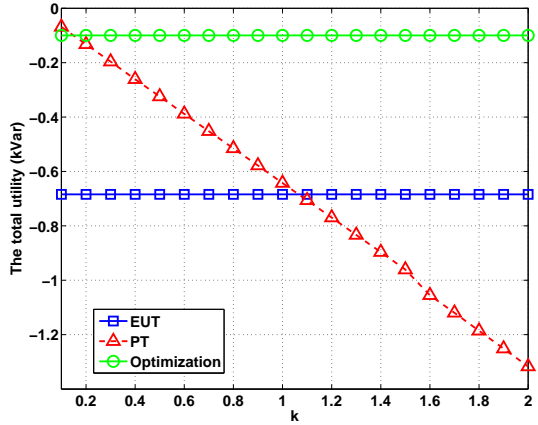}
 \vspace{-0.5cm}
   \caption{\label{fig:fig3total} Customers' total utility under both EUT and PT, as the framing sensitive varies.}
\end{center}\vspace{-0.9cm}
\end{figure}

Using the same parameters as in Fig.~\ref{fig:player3bar}, Fig.~\ref{fig:fig3total} shows the total utility of all customers for the proposed multi-customer game, as the aversion parameter $k_1=k_2=k$ varies. In this figure, we can first see that PT utility decreases as $k$ increases. In (\ref{eq:uPT}), the aversion parameter $k$ can capture customers' perception on evaluation, i.e., gains and losses in practice. This result corresponds to Theorem~\ref{th:kEUTPT}, such that if customers increase evaluation on the gains compared to losses (i.e., $k<1$), their total PT utility will be greater than that of EUT, and vice versa. Indeed, we can see that EUT and PT utilities intersect at a point, i.e., $k=1.04$. The intersection point is not exactly equal to $1$ due to the fact that the gain distortion parameter $\alpha$ is greater than $\beta$ which implies that customers have a smaller gaining distortion than losing distortion. Thus, there needs to be a high aversion parameter (i.e., $k>1$) to balance the distortion difference between gain and loss. Second, compared the EUT and PT results, we show the total utility under optimization (green line). The optimization solution is the total maximum utility of a pure strategy, while EUT and PT show the expected utility over all strategies. From this figure, we can see that the performance of EUT is always less below that of the centralized optimization approach when the power company seeks to maximize the total utility. Although the total optimal utility is greater than the result under EUT or PT, the optimal solution cannot capture customer behavior due to their independence. Thus, we can use to counter the customers' behaviors and design a decentralized, customer-aware optimal solution for reactive power compensation. Last, this figure shows how reactive power will be compensated via zonal/local customers' coordination. For example, compared to the other distortion parameters (i.e., $\alpha, \beta$) that were evaluated in Fig.~\ref{fig:beta}, we can see a more linear curve as the aversion parameter $k$ varies in Fig.~\ref{fig:fig3total}. For example, to collect the PT behavior, the power company needs to investigate how a PT customer may frame its objective gains via  $\alpha$, and how this customer may frame its objective loss via $\beta$. In this case, the power company needs to find two types of information. For the aversion parameter $k$, the power company can directly collect the information how a customer views the PT gains as oppose to the PT losses. This implies that, reactive power compensation can be coordinated via the customers' perception of operational gains as opposed to losses, thus reducing the amount of collected data.

\begin{table}[!t]\vspace{+0.3cm}
\scriptsize
  \centering
  \caption{\vspace*{-0em} All Mixed NE Strategy of $7$ Customers under both EUT and PT, ($\tau_i=\frac{6}{7}, \forall i$)}\vspace*{-0.3cm}
\begin{tabular}{|c|c|c|}
\hline
 & EUT & PT  \\ [0.5ex]\hline
$1$ & $[0.620\ 0.374\ 0.006]^T$ & $[0.333\ 0.666\ 0.001]^T$ \\\hline
$2$ & $[0.633\ 0.363\ 0.005]^T$ & $[0.313\ 0.687\ 0.001]^T$ \\\hline
$3$ & $[0.454\ 0.544\ 0.002]^T$ & $[0.319\ 0.680\ 0.001]^T$ \\\hline
$4$ & $[0.481\ 0.517\ 0.003]^T$ & $[0.286\ 0.714\ 0.001]^T$\\\hline
$5$ & $[0.461\ 0.538\ 0.001]^T$ & $[0.319\ 0.685\ 0.002]^T$\\\hline
$6$ & $[0.514\ 0.482\ 0.005]^T$ & $[0.289\ 0.709\ 0.002]^T$\\\hline
$7$ & $[0.529\ 0.468\ 0.003]^T$ & $[0.303\ 0.696\ 0.001]^T$\\\hline
\end{tabular}
\label{tab:multi}\vspace{-0cm}
\end{table}

Table~\ref{tab:multi} shows all mixed strategies of a system with $7$ customers under both EUT and PT. In this case, we assume that the active power of all customer is $2.4$kW with a three-strategy set, i.e., $p_i=2.4, \mathcal{A}_i=\{0.86, 0.87, 0.88\}, \forall i$. Their standard PFs and initial mixed strategy sets are $\widetilde \phi_i=0.85, \boldsymbol{\sigma}_i^{\text{init}}=[0.33\ 0.33\ 0.33]^T, \forall i$, respectively. In line with Corollary~\ref{co:pure2}, we first set $\tau_i=0.5$. Since $\tau_i=0.5<\frac{6}{7}$, all customers' probabilities on the maximum (pure) strategy are around $1$ and, all customers will choose $0.88$ as their pure compensation strategy under both EUT and PT. Similarly, when $\tau_i=0.9>\frac{6}{7}$, all customers will choose $0.86$ as their compensation pure strategy because all customers' probabilities on the minimum (pure) strategy approach to $1$ under both EUT and PT. Furthermore, to study the difference between EUT and PT, we set $\tau_i=\frac{6}{7}$ so as to guarantee a more ``mixed'' case. In Table~\ref{tab:multi}, the mixed strategies of all PT customers are not the same as the EUT strategies. The difference between EUT and PT mainly pertains to their initial strategies (i.e., $\phi_1=0.79, \phi_5=0.77$) and their participating order in the compensation game (i.e., we use a sequential algorithm in (\ref{eq:algo}). Last but not least, as the number of customers increases, the complexity of finding an NE via FP can increase. For example, a game with each customer having $3$ strategy will have $3^7=2187$ combinations. Increasing the number of customer to $10$ will have $69049$ combinations, which can be too complex to solve. To solve games with number of customers, we can divide the system into multiple, smaller areas and then, within each area applying the proposed scheme to obtain an ``area'' NE. The ``area'' NE can be considered as a player in a large number of customer game.

\begin{figure}[!t]
 \begin{center}
 \vspace{-0cm}
  \includegraphics[width=8cm]{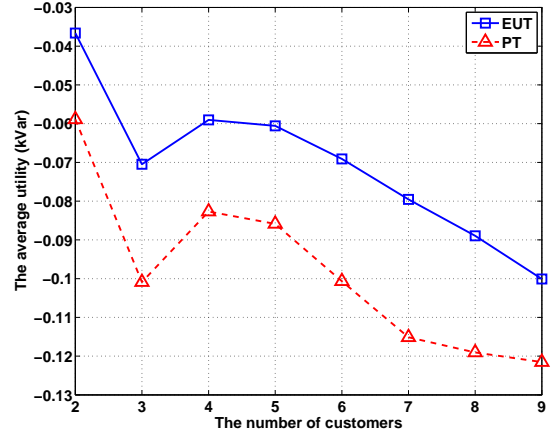}
 \vspace{-0.5cm}
\caption {\label{fig:UtilityVnum} The average utility at the equilibrium for both EUT and PT as $N$ increases.}
\vspace{-0.9cm}
\end{center}
\end{figure}

In Fig.~\ref{fig:UtilityVnum}, we can see that the average utility decreases as the number of customers varies. By assuming $p_i=2.4, \widetilde \phi_i=0.85, \sigma_i^{\text{init}}=1/N,  \mathcal{A}_i=\{0.86, 0.87\}, \forall i$, an increasing number of customers will lead to more interactions as we assume the benefit of Var exchange refers to all customers in (\ref{eq:Bu}). Under the strategy set $\mathcal{A}_i=\{0.86, 0.87\}$, the cost of each customer (i.e., $\tau_i(q_i^c-\widetilde q_i^c)^+$) in (\ref{eq:Cu}) is constant. For a small system ($N\le 3$), the customers might have significantly larger benefits to share than in the four-customer case. When $N\ge 4$, the benefits shared by all customers might decrease while the cost of each customer remains constant. Then, for the proposed model, thus, the shared benefits rely on both the number of customers and their initial active power and, the shared benefits will not always be decreasing. Thus, due to the difference of customers' initial points, at $N=3$, the utility curve has a non-monotonic.

\section{Conclusions}\label{sec:conc}
In this paper, we have introduced a novel game-theoretic approach for modeling the reactive power compensation between local customers via Var coordination. We have formulated the Var compensation process as a noncooperative game between customers, in which customers have subjective perception on their economic losses and gains. Using the framework of prospect theory, we have modeled such perceptions and analyzed their impact on the system. To solve the proposed game, we have proposed a fictitious play-based algorithm that is shown to converge to an equilibrium point under a PT scenario. Simulation results have shown that the use of prospect-theoretic considerations can provide insightful information on the behaviors of customers engaged in reactive power compensation.

\vspace*{+0.1cm}
\def\baselinestretch{0.84}
\bibliographystyle{IEEEtran}
\vspace*{+0.1cm}
\bibliography{references}

\begin{thebibliography}{10}
\providecommand{\url}[1]{#1}
\csname url@samestyle\endcsname
\providecommand{\newblock}{\relax}
\providecommand{\bibinfo}[2]{#2}
\providecommand{\BIBentrySTDinterwordspacing}{\spaceskip=0pt\relax}
\providecommand{\BIBentryALTinterwordstretchfactor}{4}
\providecommand{\BIBentryALTinterwordspacing}{\spaceskip=\fontdimen2\font plus
\BIBentryALTinterwordstretchfactor\fontdimen3\font minus
  \fontdimen4\font\relax}
\providecommand{\BIBforeignlanguage}[2]{{%
\expandafter\ifx\csname l@#1\endcsname\relax
\typeout{** WARNING: IEEEtran.bst: No hyphenation pattern has been}%
\typeout{** loaded for the language `#1'. Using the pattern for}%
\typeout{** the default language instead.}%
\else
\language=\csname l@#1\endcsname
\fi
#2}}
\providecommand{\BIBdecl}{\relax}
\BIBdecl

\bibitem{dixon2005reactive}
J.~Dixon, L.~Moran, J.~Rodriguez, and R.~Domke, ``Reactive power compensation
  technologies: State-of-the-art review,'' \emph{Proceedings of the IEEE},
  vol.~93, no.~12, pp. 2144--2164, 2005.

\bibitem{PS00}
J.~Machowski, J.~W. Bialek, and J.~R. Bumby, \emph{Power Systems Dynamics:
  Stability and Control}.\hskip 1em plus 0.5em minus 0.4em\relax New York, USA:
  Wiley, 2008.

\bibitem{kouro2010recent}
S.~Kouro, M.~Malinowski, K.~Gopakumar, J.~Pou, L.~G. Franquelo, B.~Wu,
  J.~Rodriguez, M.~A. P{\'e}rez, and J.~I. Leon, ``Recent advances and
  industrial applications of multilevel converters,'' \emph{IEEE Transactions
  on Industrial Electronics}, vol.~57, no.~8, pp. 2553--2580, 2010.

\bibitem{blaabjerg2012power}
F.~Blaabjerg, M.~Liserre, and K.~Ma, ``Power electronics converters for wind
  turbine systems,'' \emph{IEEE Transactions on Industry Applications},
  vol.~48, no.~2, pp. 708--719, 2012.

\bibitem{diaz2012review}
F.~D{\'\i}az-Gonz{\'a}lez, A.~Sumper, O.~Gomis-Bellmunt, and
  Villaf{\'a}fila-Robles, ``A review of energy storage technologies for wind
  power applications,'' \emph{Elsevier Renewable and Sustainable Energy
  Reviews}, vol.~16, no.~4, pp. 2154--2171, May 2012.

\bibitem{mohammad2014review}
S.~N. Mohammad, N.~K. Das, and S.~Roy, ``A review of the state of the art of
  generators and power electronics for wind energy conversion systems,'' in
  \emph{Proc. of the 3rd International Conference on the Developments in
  Renewable Energy Technology (ICDRET)}, 2014, pp. 1--6.

\bibitem{almeida2011optimal}
K.~Almeida and F.~Senna, ``Optimal active-reactive power dispatch under
  competition via bilevel programming,'' \emph{IEEE Transactions on Power
  Systems}, vol.~26, no.~4, pp. 2345--2354, 2011.

\bibitem{saraswat2013novel}
A.~Saraswat, A.~Saini, and A.~K. Saxena, ``A novel multi-zone reactive power
  market settlement model: A pareto-optimization approach,'' \emph{Energy},
  vol.~51, pp. 85--100, 2013.

\bibitem{xu2010research}
X.~Z. Xu and Q.~Chen, ``Research on reactive power compensation algorithm based
  on game theory in wind farm,'' \emph{Key Engineering Materials}, vol. 439,
  pp. 989--993, 2010.

\bibitem{soleymani2013nash}
S.~Soleymani, ``Nash equilibrium strategies of generating companies (gencos) in
  the simultaneous operation of active and reactive power market, with
  considering voltage stability margin,'' \emph{Energy Conversion and
  Management}, vol.~65, pp. 292--298, 2013.

\bibitem{4352074summary}
J.~Rodriguez, S.~Bernet, B.~Wu, J.~Pontt, and S.~Kouro, ``Multilevel
  voltage-source-converter topologies fo industria medium-voltage drives,''
  \emph{IEEE Transactions on Industrial Electronics}, vol.~54, no.~6, pp.
  2930--2945, Dec 2007.

\bibitem{sivachandran2011new}
P.~Sivachandran, P.~Venkatesh, and S.~Balamurugan, ``A new real time approach
  using dspace r\&d controller board for reactive power control by svc in
  autonomous wind-diesel hybrid power systems,'' \emph{International Journal of
  Engineering, Science and Technology}, vol.~3, no.~5, 2011.

\bibitem{6495738}
H.~Zhang, G.~Heydt, V.~Vittal, and J.~Quintero, ``An improved network model for
  transmission expansion planning considering reactive power and network
  losses,'' \emph{IEEE Transactions on Power Systems}, vol.~28, no.~3, pp.
  3471--3479, Aug 2013.

\bibitem{tan2013general}
S.-C. Tan, C.~K. Lee, and S.~Hui, ``General steady-state analysis and control
  principle of electric springs with active and reactive power compensations,''
  \emph{IEEE\! Transactions\! on\! Power\! Electronics}, vol. \!\!\! 28, no.
  \!\!\! 8, pp. \!\!\! 3958--\! 3969, \! 2013.

\bibitem{kisacikoglu2010examination}
M.~C. Kisacikoglu, B.~Ozpineci, and L.~M. Tolbert, ``Examination of a phev
  bidirectional charger system for v2g reactive power compensation,'' in
  \emph{Proc. of Applied Power Electronics Conference and Exposition (APEC)},
  2010, pp. 458--465.

\bibitem{santacana2010getting}
E.~Santacana, G.~Rackliffe, L.~Tang, and X.~Feng, ``Getting smart,'' \emph{IEEE
  Power and Energy Magazine}, vol.~8, no.~2, pp. 41--48, 2010.

\bibitem{bolognani2013distributed}
S.~Bolognani and S.~Zampieri, ``A distributed control strategy for reactive
  power compensation in smart microgrids,'' \emph{IEEE Transactions on
  Automatic Control}, vol.~58, no.~11, pp. 2818--2833, Nov 2013.

\bibitem{kahneman1979prospect}
D.~Kahneman and A.~Tversky, ``Prospect theory: An analysis of decision under
  risk,'' \emph{Econometrica: Journal of the Econometric Society}, pp.
  263--291, 1979.

\bibitem{tversky1981framing}
A.~Tversky, D.~Kahneman, and R.~Choice, ``The framing of decisions,''
  \emph{Science}, vol. 211, pp. 453--458, 1981.

\bibitem{tversky1992advances}
A.~Tversky and D.~Kahneman, ``Advances in prospect theory: Cumulative
  representation of uncertainty,'' \emph{Journal of Risk and Uncertainty},
  vol.~5, no.~4, pp. 297--323, 1992.

\bibitem{laobanPTintro}
W.~Saad, A.~L. Glass, N.~B. Mandayam, and H.~V. Poor, ``Toward a
  consumer-centric grid: A behavioral perspective,'' \emph{Proceedings of the
  IEEE}, vol. 104, no.~4, pp. 865--882, April 2016.

\bibitem{GT00}
T.~Ba\c{s}ar and G.~J. Olsder, \emph{Dynamic Noncooperative Game Theory}.\hskip
  1em plus 0.5em minus 0.4em\relax Philadelphia, PA: SIAM Series in Classics in
  Applied Mathematics, 1999.

\bibitem{Varchange1}
W.~Xu, Y.~Zhang, L.~da~Silva, P.~Kundur, and A.~Warrack, ``Valuation of dynamic
  reactive power support services for transmission access,'' \emph{IEEE
  Transactions on Power Systems}, vol.~16, no.~4, pp. 719--728, Nov 2001.

\bibitem{Varchange2}
A.~Rabiee, H.~Shayanfar, and N.~Amjady, ``Multiobjective clearing of reactive
  power market in deregulated power systems,'' \emph{Applied energy}, vol.~86,
  no.~9, pp. 1555--1564, 2009.

\bibitem{Varchange3}
S.~K. Parida, S.~C. Srivastava, and S.~N. Singh, ``A review on reactive power
  management in electricity markets,'' \emph{International Journal of Energy
  Sector Management}, vol.~5, no.~2, pp. 201--214, 2011.

\bibitem{prelec1998probability}
D.~Prelec, ``The probability weighting function,'' \emph{Econometrica}, pp.
  497--527, 1998.

\bibitem{PT01}
G.~A. Quattrone and A.~Tversky, ``Contrasting rational and psychological
  analyses of political choice,'' \emph{The American Political Science Review},
  vol.~82, no.~3, pp. 719--736, 1988.

\bibitem{PT02}
C.~Camerer, L.~Babcock, G.~Loewenstein, and R.~Thaler, ``Labor supply of {New
  York City} cab drivers: One day at a time,'' \emph{Quarterly Journal of
  Economics}, no. 111, pp. 408--441, May 1997.

\bibitem{PTbook01}
P.~P. Wakker, \emph{Prospect theory: For risk and ambiguity}.\hskip 1em plus
  0.5em minus 0.4em\relax Cambridge University Press, 2010.

\bibitem{LEARN07}
L.~Rose, S.~Lasaulce, S.~M. Perlaza, and M.~Debbah, ``Learning equilibria with
  partial information in decentralized wireless networks,'' \emph{{IEEE}
  Commun. Mag.}, vol.~49, no.~8, pp. 136--142, Aug 2011.

\end{thebibliography}

\end{document}